\newif\ifarxiv
  \providecommand\BibTeX{{%
    \normalfont B\kern-0.5em{\scshape i\kern-0.25em b}\kern-0.8em\TeX}}}
\newcommand{\p}[1]{\left( #1 \right)}
\newcommand{\cd}[0]{\cdot}
\newcolumntype{C}[1]{>{\centering\let\newline\\\arraybackslash\hspace{0pt}}m{#1}}
\newtheorem{lemma}{Lemma}
\newtheorem{definition}{Definition}
\newtheorem{assumption}{Assumption}
\newtheorem{example}{Example}
\newcommand{\comb}[0]{\ensuremath{c}}
\newcommand{\alg}[0]{\ensuremath{a}}
\newcommand{\human}[0]{\ensuremath{h}}
\newcommand{\Alg}[0]{\ensuremath{A}}
\newcommand{\Human}[0]{\ensuremath{H}}
\newcommand{\select}[0]{\ensuremath{w_h}}
\newcommand{\nspace}[0]{\ensuremath{N}}
\newcommand{\prob}[0]{\ensuremath{p}}
\title{Human-Algorithm Collaboration: Achieving Complementarity and Avoiding Unfairness}
\author[1]{Kate Donahue\thanks{kdonahue@cs.cornell.edu}\thanks{Work done during internship at Amazon.}} 
\author[2]{Alexandra Chouldechova}
\author[3]{Krishnaram Kenthapadi\thanks{Work done while at Amazon.}}
\affil[1]{Department of Computer Science, Cornell University}
\affil[2]{Carnegie Mellon University and Amazon Web Services}
\affil[3]{Fiddler AI}
\date{}
\begin{document}
\else 
\begin{document}

\title{Human-Algorithm Collaboration: Achieving Complementarity\\ and Avoiding Unfairness}

\author{Kate Donahue} 
\authornote{Work conducted while on an internship at Amazon.}
\email{kdonahue@cs.cornell.edu}
\affiliation{%
  \institution{Cornell University} 
  \country{USA}
}
\author{Alexandra Chouldechova}
\affiliation{%
  \institution{Carnegie Mellon University and Amazon Web Services}
  \country{USA}
}
\author{Krishnaram Kenthapadi}
\authornote{Work done while at Amazon.}
\affiliation{%
  \institution{Fiddler AI} 
  \country{USA}
}
\renewcommand{\shortauthors}{}

\renewcommand{\shortauthors}{Donahue, Chouldechova, Kenthapadi} 

\fi 

\ifarxiv
\maketitle 
\else
\fi

\begin{abstract}
Much of machine learning research focuses on predictive accuracy: given a task, create a machine learning model (or algorithm) that maximizes accuracy.  In many settings, however, the final prediction or decision of a system is under the control of a human, who uses an algorithm's output along with their own personal expertise in order to produce a combined prediction. One ultimate goal of such collaborative systems is \textit{complementarity}: that is, to produce lower loss (equivalently, greater payoff or utility) than either the human or algorithm alone.  However, experimental results have shown that even in carefully-designed systems, complementary performance can be elusive. Our work provides three key contributions. First, we provide a theoretical framework for modeling simple human-algorithm systems and demonstrate that multiple prior analyses can be expressed within it. Next, we use this model to prove conditions where complementarity is impossible, and give constructive examples of where complementarity is achievable. Finally, we discuss the implications of our findings, especially with respect to the fairness of a classifier. In sum, these results deepen our understanding of key factors influencing the combined performance of human-algorithm systems, giving insight into how algorithmic tools can best be designed for collaborative environments. 
\end{abstract}

\ifarxiv
\else
\maketitle 
\fi

\section{Introduction}

\begin{figure}[t]
  \begin{minipage}[t]{0.27\textwidth}
  \centering
    \includegraphics[width=\textwidth]{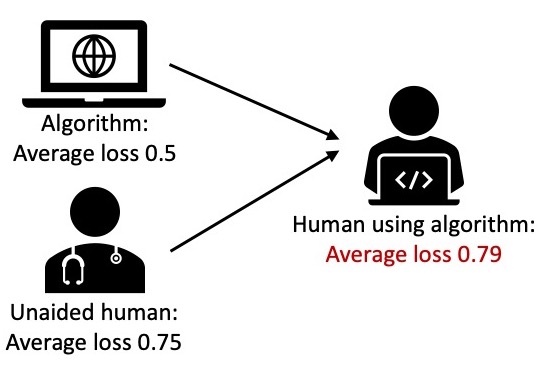}
     \subcaption{Scenario 1: Combined system has higher loss than either unaided human or algorithm. \label{fig:scenario1}}
  \end{minipage}
  \hspace{3em}
  \begin{minipage}[t]{0.27\textwidth}
  \centering
    \includegraphics[width=\textwidth]{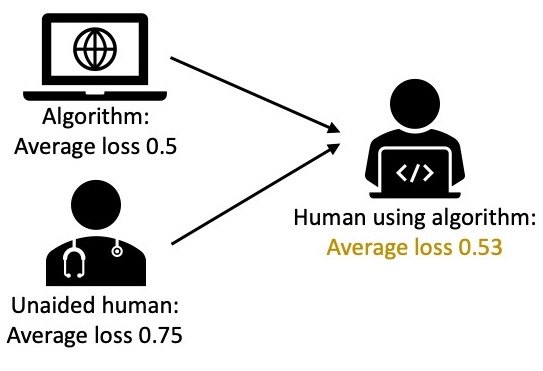}
    \subcaption{Scenario 2: Combined system has lower loss than unaided human, but higher loss than algorithm. \label{fig:scenario2}}
  \end{minipage}
  \hspace{3em}
    \begin{minipage}[t]{0.27\textwidth}
  \centering
    \includegraphics[width=\textwidth]{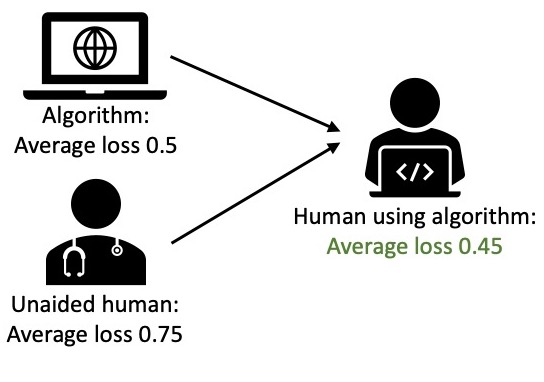}
    \subcaption{Scenario 3: Complementary performance: lower error than either unaided human or algorithm.\label{fig:scenario3}}
  \end{minipage}
  \caption{Three possible scenarios for human-algorithm collaboration, each with the same algorithmic and unaided human loss. However, the loss of the combined system (human using the algorithm) might vary substantially. Section \ref{sec:motivate} gives a more detailed analysis. }  \label{fig:threescenarios}
  \vspace{-0.6cm}
\end{figure}

Consider a prediction task where the goal is to take a set of features about the world as input and predict an outcome of interest. A typical machine learning approach to such a task is to attempt to select a model with low (generalization) loss for the problem at hand. If such a model is applied directly to the prediction task, it will minimize expected loss. 

However, this standard approach does not necessarily reflect the way that machine learning tools are actually implemented. Often, algorithmic predictions are presented to humans, who then make a final decision by additionally relying on their own expertise \citep{vodrahalli2021humans, bansal2021most, KerriganCombining, yin2019understanding}. For example, consider a doctor looking at a medical record and trying to make a determination of whether disease is present. An algorithmic prediction based on the record may be useful, but it almost certainly will not be the sole factor influencing the doctor's diagnosis. For example, the doctor may have access to different data, such as conversations with the patient. The doctor may also have access to different knowledge, such as distilled expertise from years of practice. The doctor's decision will be a function of the algorithm's prediction, as well as their own inherent belief. Note that the doctor's decision-making may be imperfect, such as relying on their own judgement when the algorithm may have better performance. A successful outcome occurs when the combined system (the doctor using algorithmic output) has low loss, not when the algorithm alone has low loss. Figure \ref{fig:threescenarios} illustrates three scenarios where a combined human-algorithm system could have differing levels of loss. 

In particular, one especially valuable goal is \emph{complementarity} (or \emph{complementary performance}). Complementarity (originally defined in \citet{bansal2021does}) is achieved whenever the combined human-algorithm system has strictly lower expected loss than either the human or the algorithm alone (Figure \ref{fig:scenario3}). Complementarity is not necessary for a combined system to be deemed successful: for example, a combined system that does better than the human alone, but not necessarily better than the algorithm alone, would still reflect an improvement from a human-alone status-quo. However, complementarity creates the strongest incentive for adoption of a combined human-algorithm system, which is why it is the focus of our analysis.

{\textbf{Contributions:}}
At a high level, we address the following problems: \emph{ (i) How do we formally and tractably model human-algorithm collaborative systems? (ii) When can human-algorithm collaborative systems produce higher accuracy than either the human or algorithm alone? (iii) What are the fairness implications of such collaborative systems? }

The contributions of this work are three-fold. First, in Section \ref{sec:modelassump}, we introduce a simple theoretical framework for analyzing human-algorithm collaboration, and demonstrate the richness of this framework by showing that it can encapsulate models from previous works analyzing human decision-making. In Section \ref{sec:motivate}, we provide a simple, concrete motivating example using this framework that illustrates the core results of this paper. 

Next, in Section \ref{sec:complementarity}, we use this approach to analyze complementarity. First, we present several impossibility results that characterize regimes in which human-algorithm collaboration can never achieve complementarity. We then give concrete conditions for when complementarity can be achieved. In particular, our results suggest that complementarity is easier to achieve when loss rates are highly variable: when the unaided human (or algorithm) has very low loss on some inputs and very high loss on inputs. Disparate levels of loss raises issues of fairness, which we turn to next.

In Section \ref{sec:fair} we conclude our analysis by examining the fairness impacts of complementarity. The variability in loss rates implied by our results has implications for fairness, since types of inputs with very high error rates may correspond to protected attributes, such as race, gender, or ethnicity. To investigate this concern, we propose and analyze three types of fairness relating to human-algorithm systems, giving conditions for when they can and cannot be achieved.  One of main results shows that when complementarity is achieved, at least one group does worse in the combined system than under the human-only status quo.  Additionally, we give a simple condition where the combined human-algorithm system will guarantee that loss disparity between different protected groups will not increase. 


\section{Related Work}

\subsection{Human-Algorithm Collaboration}
A series of papers have explored issues related to human-algorithm collaboration. For example, \citet{poursabzi2021manipulating} and \citet{yin2019understanding} analyze how explainability and accuracy, respectively, influence how humans use algorithmic predictions. Similarly, \citet{dietvorst2020people} hypothesize that humans may prefer algorithmic predictions that are more variable in their loss rates and \citet{dietvorst2018overcoming} suggests that allowing algorithmic predictions to be modified may make humans more likely to use them. 

Other papers center more on in-depth qualitative assessments of how professionals incorporate tailor-made algorithmic tools into their workflow. For example, \citet{lebovitz2021ai, lebovitz2020incorporate} studies how doctors in major US hospitals use AI predictions in their daily work.
Similarly, \citet{okolo2021cannot} studies how community healthcare workers in India believe AI tools could influence their work. Finally, \citet{yang2018investigating} studies how UX designers work with machine learning tools and the data scientists who create them. 

Some research teams that develop tools for human-in-the-loop settings have run experiments analyzing how their tools perform with human collaboration. For example, \citet{raghu2019algorithmic, beede2020human} both study how an AI tool for predicting diabetic retinopathy fits in with a broader ecosystem (human doctors and the overall healthcare system). Similarly, \citet{de2020case} studies how child welfare call screeners incorporate algorithmic predictions in their risk assessments.  \citet{tan2018investigating} studies how human and algorithmic distributions of loss rates differ for recividism predictions for the COMPAS dataset (but not in ways that allowed for complementary performance by combined systems). Similarly, \citet{geirhos2020beyond} compares the similarity (consistency) of loss in predictions made by humans and a deep learning algorithm.

Some computer science papers specifically analyze models of human-algorithm interaction, such as \citep{cowgill2020algorithmic, sayres2019using, srivastava2020empirical, bansal2021most, KerriganCombining, vodrahalli2021humans}. Of these, \citet{bansal2021does} is especially relevant because it is framed through the goal of complementarity. \citet{bansal2021most} also highlights the fact that the optimizing for the algorithm's error may not minimize the loss of the combined system. In Section \ref{sec:modelassump}, we show how human decision-making processes rules inspired by analyses \citet{bansal2021most} and \citet{vodrahalli2021humans} can be represented in our model. Some papers show how to build models optimized for a human-algorithm \emph{deferral} system, where the final decision is made by either the human or the algorithm \citep{de2020classification, okati2021differentiable, meresht2020learning}. \citet{straitouri2022provably} studies a variant of this problem for classification where the algorithm presents a subset of possible labels to the human, who selects the final decision from among them. Finally, \citet{cabitza2021studying} studies multiple methods of aggregating human and algorithmic predictions. 

\subsection{Fairness in Human-Algorithm Collaboration}
Some papers specifically consider the fairness implications of combined human-algorithm systems. For example, \citet{madras2017predict} studies fairness and accuracy in deferring to a human expert, while \citet{keswani2021towards, keswani2022designing} extends this analysis to deferring to multiple different human experts.  For example, \citet{gillis2021fairness} takes a theoretical approach towards modeling the human-algorithm system and gives conditions where adding a biased (unfair) human can change the fairness properties of the overall system. \citet{valera2018enhancing} studies a system with multiple biased \enquote{experts}, where assigning the correct expert to each task can improve accuracy while still satisfying fairness requirements. 
\subsection{Related Papers From Other Areas}
Finally, some papers in seemingly unrelated areas end up being relevant to our analysis. For example, ensemble learning studies how to incorporate predictions from multiple algorithms into a unified (more accurate) system \cite{kuncheva2014combining}. Ensemble learning differs from our analysis in that each expert (predictor) is assumed to be an algorithm, and predictions are assumed to be combined by some additional algorithm under our control (rather than a human decision-maker we cannot control). However, certain factors identified in the ensemble learning literature as affecting overall performance, such as diversity, are relevant for our analysis \cite{didaci2013diversity}. Additionally, multiple works study how fairness properties of predictors change when they are composed \cite{dwork2018fairness, dwork2020individual, wang2019practical}. These works are relevant to our analysis in Section \ref{sec:fair} of the fairness of a combined human-algorithm system, but differ somewhat from ours: in general, these other papers tend to study fairness of allocating or achieving some desired prediction, while our analysis describes fairness as equal loss across groups. Finally, \citet{meehl1954clinical} compares statistical and clinical methods of reasoning, a framing that parallels to our analysis of algorithmic versus human prediction methods. 

\section{Model and Assumptions}\label{sec:modelassump}

\subsection{Model}
Our model considers a prediction task: given some element $x \in \mathcal{X}$, make a prediction $y \in \mathcal{Y}$ that minimizes some loss function $\mathcal{L}$, with loss bounded $\geq 0$. This loss could reflect any error rates for any type of learning problem---for example, regression and classification tasks could both be represented by this loss function. We model the input space $\mathcal{X}$ as being made up of $\nspace$ discrete regimes: all inputs within the same regime are identical from the perspective of algorithmic and human loss. This is without loss of generality, given that $N$ could be arbitrarily large. We are \emph{not} assuming that either the human or algorithm has knowledge of these regimes, simply that they exist. We will denote the probability of seeing regime $i$ is given by $\prob_i$, with $\sum_{i =1}^{\nspace}\prob_i = 1$. 

The human-algorithm system consists of three components: 
\begin{enumerate}
    \item An \emph{algorithm}, which for each regime in the input space $x_i \in \mathcal{X}$ makes a prediction $\hat y_i^a$ with some loss rate $\alg_i$. The average loss is given by $\sum_{i =1}^{\nspace}\prob_i \cd \alg_i = \Alg$. We can write $\alg_i = \Alg + \delta_{ai}$, with  $\sum_{i=1}^{\nspace} \prob_i \cd \delta_{ai} = 0$. The term $\delta_{ai}$ represents how much $\alg_i$ varies (differs from the average loss $\Alg$). 
    \item A \emph{unaided human}, which similarly for each regime in the input space $x_i \in \mathcal{X}$ makes some prediction $\hat y_{i}^h$. The average loss of the human is given by $\sum_{i =1}^{\nspace}\prob_i \cd \human_i = \Human$. Similarly, we write write $\human_i = \Human + \delta_{hi}$, with  $\sum_{i=1}^{\nspace} \prob_i \cd \delta_{hi} = 0$. 
    \item Finally, some \emph{combiner} (a human using algorithmic input) $g(\hat y_{i}^a, \hat y_{i}^h)$, which takes predictions given by the algorithm and unaided human and returns a combined prediction, $\hat y_i^c$. The combining function reflects human decision-making: it could select the algorithm's prediction, the unaided human's prediction, or interpolate between the two of them. We could also view this as a (loss) \emph{combining function} $\comb(\alg_i, \human_i)$ that takes the algorithmic loss and human loss on a particular instance and returns some combined loss. 
\end{enumerate}

In general, we may not have control over all (or even any) of these components. For example, the combining function reflects human judgement, which typically can't be directly manipulated. A primary goal of our analyses is to determine when a human-algorithm system displays complementarity, defined in Definition \ref{def:comp} below. 

\begin{definition}[From \citet{bansal2021does}]\label{def:comp}
A human-algorithm system displays \textbf{complementary performance} when the combined system has (strictly) lower loss than either the human or algorithm: 
$$\sum_{i=1}^{\nspace}\prob_i \cd \comb(\alg_i, \human_i) < \min\p{\sum_{i =1}^{\nspace}\prob_i \cd \alg_i,\sum_{i =1}^{\nspace}\prob_i \cd \human_i}= \min(\Alg, \Human)$$
\end{definition}

\subsection{Assumptions}
The combining function models the key question in human-algorithm collaboration: how do humans incorporate algorithmic predictions with their own expertise? In this work, we will make two main assumptions about how such combination occurs. 
First, throughout this paper, we will find it useful to work in the space of combining \emph{losses}, rather than combining predictions. Specifically, we will use the $\comb(\alg_i, \human_i)$ loss combining function.  Assumption \ref{assump:lossonly} describes the assumption this implies.
\begin{assumption}\label{assump:lossonly}
The loss of a combined human-algorithm system can be modeled by a combining rule relying only on the loss rates of the unaided human and algorithm in a particular regime: $\comb(\alg_i, \human_i)$. That is, regimes with identical (unaided human, algorithm) pairs of loss rates are treated identically. 
\end{assumption}
This assumption reflects the case where the level of accuracy in the algorithm and (unaided) human is the only feature influencing the accuracy of the combined system. An example of a situation that might \emph{violate} this assumption is if regime 1 and 2 both have loss of 3\% for the unaided human and 5\% for the algorithm, but the human using the algorithm (combined system) has loss of 3\% for regime 1 and 4\% for regime 2. Considering the case where this assumption is relaxed could be an interesting avenue for future work: however, it would likely result in much more complicated analysis. 

Next, Assumption \ref{assump:bounded} below, describes the assumption that the combining rule's outputs are bounded. 
\begin{assumption}\label{assump:bounded}
For each regime, the loss of the combined system is bounded between the loss of the human and algorithm: 
$$ \min(\alg_i, \human_i) \leq \comb(\alg_i, \human_i) \leq \max(\alg_i, \human_i)$$
\end{assumption}

This assumption reflects a case where the combiner operates by interpolating between the predictions made by the human or algorithm. An example of a situation that might \emph{violate} this assumption is if the combined system has loss 3\% in a certain regime, where the human has loss 4\% and the algorithm has loss 6\% in that regime. A bounded combining rule makes modeling human-algorithm collaboration more realistic: complementarity is trivial to achieve if the combining rule's loss can be arbitrarily disconnected from the loss of the human and algorithm. 

It's worth considering why a system may satisfy bounded inputs (Assumption \ref{assump:bounded}) and still exhibit complementarity. Assumption \ref{assump:bounded} refers to bounds at the level of each \emph{regime}, while complementarity refers to \emph{overall average loss}. For example, Scenario 3 (Table \ref{tab:comp}) in Section \ref{sec:motivate} obeys the bound in Assumption \ref{assump:bounded}, and yet also achieves complementarity. 

\subsection{Weighting function}
In this work, we will find it helpful to think about the combined human-algorithm system as involving a \emph{weighting function} $0 \leq \select(\alg_i, \human_i)\leq 1$ controlling how much the human influences the final prediction:
\begin{equation}\label{eq:weighting}
\comb(\alg_i, \human_i) = (1-\select(\alg_i, \human_i)) \cd \alg_i + \select(\alg_i, \human_i) \cd \human_i 
\end{equation}
Lemma \ref{lem:equivalent}, below, shows that using a weighting function requires no new assumptions. 
\begin{restatable}{lemma}{equivalent}
\label{lem:equivalent}
Any combining rule relying only on loss rates (Assumption \ref{assump:lossonly}) with bounded output (Assumption \ref{assump:bounded}) can be written as a combining rule with a weighting function $0 \leq \select(\alg_i, \human_i)\leq 1$. 
\end{restatable}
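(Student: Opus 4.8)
The plan is to directly construct the weighting function $\select$ by algebraically inverting Equation~\eqref{eq:weighting}. Given any combining rule $\comb(\alg_i, \human_i)$ satisfying Assumptions~\ref{assump:lossonly} and~\ref{assump:bounded}, I would define, for each pair of loss rates $(\alg_i, \human_i)$ with $\alg_i \neq \human_i$,
\[
\select(\alg_i, \human_i) = \frac{\comb(\alg_i, \human_i) - \alg_i}{\human_i - \alg_i},
\]
and for pairs with $\alg_i = \human_i$, set $\select(\alg_i, \human_i)$ to an arbitrary fixed value in $[0,1]$ (say $0$). Substituting back into Equation~\eqref{eq:weighting} recovers $\comb(\alg_i, \human_i)$ in both cases: in the non-degenerate case by a one-line rearrangement, and in the degenerate case because Assumption~\ref{assump:bounded} forces $\comb(\alg_i, \human_i) = \alg_i = \human_i$, so any weight reproduces the value.

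Next I would verify that $\select(\alg_i, \human_i) \in [0,1]$ via a short case analysis on the sign of $\human_i - \alg_i$, using Assumption~\ref{assump:bounded}. If $\human_i > \alg_i$, then $\min(\alg_i, \human_i) = \alg_i \leq \comb(\alg_i, \human_i) \leq \human_i = \max(\alg_i, \human_i)$, so $0 \leq \comb(\alg_i, \human_i) - \alg_i \leq \human_i - \alg_i$, and dividing by the positive quantity $\human_i - \alg_i$ yields $\select \in [0,1]$. The case $\human_i < \alg_i$ is symmetric: the inequalities reverse, and dividing by the negative quantity $\human_i - \alg_i$ flips them back into $[0,1]$.

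Finally, I would note that Assumption~\ref{assump:lossonly} is precisely what makes this construction well-defined: since $\comb$ depends only on the pair $(\alg_i, \human_i)$, the displayed formula defines $\select$ as a genuine function of $(\alg_i, \human_i)$, so regimes with identical loss pairs receive identical weights. There is no substantive obstacle; the only subtlety is the degenerate case $\alg_i = \human_i$, where the interpolation weight is underdetermined and must be chosen arbitrarily, and it is worth emphasizing that Assumption~\ref{assump:bounded} is exactly what pins down $\comb$ in that case so the arbitrary choice is harmless.
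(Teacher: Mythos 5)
Your construction is exactly the paper's: define $\select(\alg_i,\human_i)=\frac{\comb(\alg_i,\human_i)-\alg_i}{\human_i-\alg_i}$ when $\alg_i\ne\human_i$, verify membership in $[0,1]$ by a sign case analysis using Assumption~\ref{assump:bounded}, and check the interpolation identity, with the degenerate case $\alg_i=\human_i$ handled by an arbitrary admissible weight (the paper picks $\tfrac{1}{2}$, you pick $0$; both are harmless since boundedness forces $\comb=\alg_i=\human_i$ there). The proposal is correct and follows essentially the same argument.
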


One simple (ideal) combining function is given by Example \ref{ex:min}: it simply selects whichever of the unaided human or algorithm has lower loss. While this is the best possible combining function (given our assumptions), it is likely not a realistic model of how human decision-makers incorporate algorithmic advice. 

\begin{example}[Min]\label{ex:min}
The combining function becomes $\comb(\alg, \human) = \min(\alg, \human)$ is represented by the weighting function: 
$$\select(\alg, \human) = \begin{cases} 1 & \human \leq \alg \\
0 & \text{otherwise}
\end{cases}
$$   
\end{example}

While our framework is simple, it is also sufficiently flexible to capture models of human-algorithm collaboration studied in multiple previous papers (described in greater detail in Appendix \ref{app:modelingrules}). Examples \ref{ex:bansal} and \ref{ex:vodrahalli} demonstrate this in reference two particular models suggested by prior literature.
First, Example \ref{ex:bansal} selects whichever of the human or algorithm has lower loss rate with probability $p_s$. For high $p_s$, this reflects a decision-maker who accurately trusts whichever has lower loss.

\begin{example}[\citet{bansal2021most}]\label{ex:bansal}
The analysis in \citet{bansal2021most} suggests the weighting function:
$$\select(\alg, \human) = \begin{cases}
p_s & \human \leq \alg \\
1-p_s & \text{otherwise}
\end{cases}$$
\end{example}

Next, in Example \ref{ex:vodrahalli}, the decision-maker first decides whether to consider algorithmic advice at all: it does so only if the loss rate is $\epsilon$ lower than the human loss rate. Then, the decision-maker incorporates algorithmic advice with some probability $p_s(\cd)$ that is a function depending on the gap between human and algorithmic loss rates. 

\begin{example}[\citet{vodrahalli2021humans}]\label{ex:vodrahalli}
The two-stage model in \citet{vodrahalli2021humans} could be written as: 
$$\select(\alg, \human) = \begin{cases}
1 & \alg \geq \human - \epsilon \\
p_s(\human-\alg)  & \text{otherwise}
\end{cases}$$
\end{example}

\subsection{Research Ethics and Social Impact}
While our paper is primarily theoretical, its application area prompts a number of ethical considerations. For example, in this work, we are primarily concerned with building better prediction functions. In general, such functions could be used for positive means (helping a doctor correctly diagnose a disease) or negative ones (enabling the identification and repression of minority groups). Additionally, even if a function is being used for positive goals, it could be the case that factors besides the loss rate are ultimately more important. For example, it could be that the process of coming to a prediction, rather than the prediction itself, is more important. This is especially salient for our discussion of fairness, which assumes that the fairness of outcomes (of loss rate disparities) is the relevant factor to consider, rather than fairness of the prediction process. The issue of explanation of algorithmic predictions, which is orthogonal to our main analysis, could be relevant for this consideration.

\section{Motivating example}\label{sec:motivate}

\begin{table}[t]
  \begin{minipage}[t]{\textwidth}
  \centering 
\begin{tabular}{|c|c|c|c|c|c|}
\hline
                        & \textbf{(Unaided) human} & \textbf{Algorithm} &  \textbf{\begin{tabular}[c]{@{}c@{}}Combined\\ (human using algorithm)\end{tabular}}  & \textbf{\begin{tabular}[c]{@{}c@{}}Weight\\ (on unaided human)\end{tabular}} \\ \hline
\textbf{Regime 1} & 1            &  0.35              & 0.94      & 0.9                               \\ \hline
\textbf{Regime 2} & 0.5          & 0.65               & 0.64          & 0.1                            \\ \hline
\textbf{Average}  & 0.75             & 0.5                & 0.79         & 0.5                             \\ \hline
\end{tabular}
\caption{Scenario 1 Loss rates: An example of a combined human-algorithm system. There are two regimes, each making up equal proportions of the input space ($p_0 = p_1 = 0.5$). Note here that complementarity is \emph{not} satisfied: in fact, the combined system has higher loss than either the human or algorithm alone!}
\label{tab:nocomp}
  \end{minipage}
  \begin{minipage}[t]{\textwidth}
\centering 
\begin{tabular}{|c|c|c|c|c|c|}
\hline
                        & \textbf{(Unaided) human} & \textbf{Algorithm} & \textbf{\begin{tabular}[c]{@{}c@{}}Combined\\ (human using algorithm)\end{tabular}} &   \textbf{\begin{tabular}[c]{@{}c@{}}Weight\\ (on unaided human)\end{tabular}}\\ \hline
\textbf{Regime 1} & 1            &  0.35              & 0.51           & 0.25                        \\ \hline
\textbf{Regime 2} & 0.5          & 0.65             & 0.54            & 0.75                          \\ \hline
\textbf{Average}  & 0.75             & 0.5                & 0.53                  & 0.5                     \\ \hline
\end{tabular}
\caption{Scenario 2 Loss rates: A second example of a combined human-algorithm system, but with different loss distributions. Here, the combined system (human using algorithm) has average loss which is lower than the loss of the unaided human, but higher than the loss of the algorithm alone. }
\label{tab:onebetter}
  \end{minipage}
    \begin{minipage}[t]{\textwidth}
  \centering 
\begin{tabular}{|c|c|c|c|c|c|}
\hline
                        & \textbf{(Unaided) human} & \textbf{Algorithm} & \textbf{\begin{tabular}[c]{@{}c@{}}Combined\\ (human using algorithm)\end{tabular}}  &    \textbf{\begin{tabular}[c]{@{}c@{}}Weight\\ (on unaided human)\end{tabular}}\\ \hline
\textbf{Regime 1} & 1.15          & 0.2               & 0.44          & 0.25                              \\ \hline
\textbf{Regime 2} & 0.35            & 0.8                & 0.46            &0.75                              \\ \hline 
\textbf{Average}  & 0.75              & 0.5                & 0.45                  &0.5                        \\ \hline
\end{tabular}
\caption{Scenario 3 Loss rates: A third example of a combined human-algorithm system. Here, the combined system (human using algorithm) displays \emph{complementary performance}: its average loss of 0.45 is lower than the loss of either the unaided human (0.75) or algorithm alone (0.5). }
\vspace{-0.3cm}
\label{tab:comp}
  \end{minipage}
  \caption{Three possible scenarios for human-algorithm collaboration. In each, the algorithm and unaided human have the same average loss. However, the loss of the human using the algorithm varies. Figure \ref{fig:threescenarios} gives a visual description of these scenarios.}  \label{threescenarios}
  \vspace{-0.9cm}
\end{table}

To further motivate the analysis, let us revisit the medical application from the introduction (Figure \ref{fig:threescenarios}) and introduce further specifics\ifarxiv\footnote{Code for all examples and figures in this paper are available at \url{https://github.com/kpdonahue/human_algorithm_collaboration}}\fi.  Consider the medical prediction task of using information from a patient's medical record to predict disease severity (on a scale from 0 to 5, as in \cite{raghu2019algorithmic}). As illustrated in Figure \ref{fig:threescenarios}, we will assume that doctors relying on their medical training (unaided humans) have an average loss rate of 0.75: they are off by 0.75 grades, on average. A data science team has created a machine learning algorithm that has average loss of 0.5.  

Even though the algorithm has lower loss, doctors won't simply rubber-stamp algorithmic suggestions. Because they have specialized training and access to additional information (such as conversations with patients), a doctor might reasonably incorporate algorithm advice only partially, or only sometimes. However, this leaves open a crucial question: what is the combined human-algorithm loss? That is, what is the average loss once doctors start incorporating the new machine-learning algorithm into their decision-making process?
\subsection{Three scenarios}
To help build intuition before delving into our formal theoretical results, we will consider three different example scenarios for how a human-algorithm collaborative system might look like, given in Tables \ref{tab:nocomp}, \ref{tab:onebetter}, \ref{tab:comp}. Each scenario has the same average loss for the unaided human and for the algorithm (0.75 and 0.5, respectively). However, each scenario differs in 1) the way (unaided) human and algorithm loss is distributed and 2) the way the human combines algorithmic advice. Specifically, these three scenarios illustrate the simplified case where patient records (\enquote{regimes}) come in one of two types: regime 1 and regime 2, each of which makes up 50\% of the total input space. These regimes might differ in multiple ways: say disease progression, data quality, or patient characteristics. We will treat observations from within the same regime as identical: the doctor or algorithm or combined system each has uniform loss for every record within the same regime. (In later sections, we will relax this consider arbitrary numbers of regimes, reflecting arbitrarily complex distributions of loss). 

\textbf{Scenario 1 (combined loss higher than unaided human or algorithm):} For example, in Table \ref{tab:nocomp}, the unaided human has loss 1 on instances of type 1, but loss 0.5 for instances of type 2. The algorithm's loss rate distribution differs: 0.35 for type 1, and 0.65 for type 2. Finally, the combined loss (loss of the human using algorithmic input) for a particular regime is a function of the loss of the unaided human and the algorithm: for this example, it's 0.94 in regime 1, and 0.64 in regime 2. Note that this results in an average loss of 0.79---the human using the algorithm has a strictly \emph{greater} loss rate than either the unaided human or the algorithm! This unfortunate case could result from inappropriately relying on the algorithm: for example, the doctor might mistakenly incorporate algorithmic advice more frequently in regimes where it happens to have higher loss.  The fourth column concretely reflects this cause: it calculates the weighting function for each regime (reliance on the unaided human, as opposed to the algorithm). In this case, the regime 1 weighting function is 0.9, meaning the combined system is relying heavily on the unaided human for this instance, even though it has higher loss than the algorithm.  Similarly, in regime 2, the weighting function is 0.1, indicating that the combined system is relying more on the algorithm, even though (for this regime), the algorithm has higher loss than the human. This inappropriate reliance explains why the combined system has higher loss than either the unaided human or algorithm alone. 

\textbf{Scenario 2 (combined system lower error than human, higher error than algorithm):} Table \ref{tab:onebetter} presents a slightly more optimistic Scenario 2. Here, \emph{the distribution of loss rates for the unaided human and the algorithm are the same as in Scenario 1, but the combined loss differs}: the average loss of the human using the algorithm is 0.53---lower than the loss of the unaided human (0.75), but higher than the loss of the algorithm alone (0.5). This reflects a common scenario in human-algorithm collaboration: the combined system ends up improving over the human alone, but still falls short of the loss rate achievable by the algorithm alone \cite{bansal2021does}. In this case, the reason is because \emph{the combined system is doing a better job than in Scenario 1 of appropriately relying on the unaided human or algorithm}. Note that the weighting function is lower in regime 1 (when the unaided human has higher loss) and higher in regime 2 (where the unaided human has lower loss). This more appropriate reliance explains why the combined system has better performance than Scenario 1. 

\textbf{Scenario 3 (complementarity: lower than unaided human or algorithm):} Finally, Table \ref{tab:comp} presents Scenario 3. In this case, \emph{we keep the weighting function the same as in Scenario 2}, as well as the average loss of the unaided human and the algorithm. However, \emph{we change the distribution of loss rates across the regimes: we make them more variable}. For example, the unaided human now has loss 1.15 in regime 1 and 0.35 in regime 2, while the algorithm has loss 0.2 in regime 1 and 0.8 in regime 2. While average (unaided) human and algorithmic loss are the same as in Scenarios 1 and 2, the combined human-algorithmic system ends up having average loss of 0.45, which strictly lower than either the unaided human or algorithm alone.  Complementarity arises here because diversity of errors and appropriate reliance when algorithmic error is lower than the human’s.  As we show in Section \ref{sec:complementarity}, these conditions are necessary for complementarity to arise.

Let's return to our motivating example: a doctor using algorithmic input to make predictions about patients. These three examples illustrate different possible overall loss rates from a doctor and algorithm with the same average loss rates alone. It is clearly important for overall performance (including patient well-being) to determine whether the human-algorithm system will result in something like Scenario 1 (with higher loss than either the doctor or algorithm alone) or Scenario 3 (with lower loss than either alone). In the coming sections we lay out a theoretical framework for analyzing combined human-algorithm systems more generally to understand their implications for fairness and complementary performance. Our analysis formalizes the observations made in the scenarios above by precisely characterizing the role that ``appropriate'' reliance on the algorithm and variability in performance across regimes plays in complementarity and fairness.

\subsection{Complementarity and fairness}
Complementarity is the best-case scenario for human-algorithm collaboration, and much of our paper will revolve around proving when it can and cannot exist. For instance, in Section  \ref{sec:complementarity} we will give theoretical results describing the kinds of factors necessary for complementarity to be achievable. In particular, we will give conditions on the distributions of loss rates as well as the way predictions are combined. We will show that, all else being equal, complementarity is easier to achieve when distributions of loss for the algorithm and unaided human are highly variable. For example, the algorithm's loss rates are less variable in Tables \ref{tab:nocomp} and \ref{tab:onebetter} (ranging from 0.35 to 0.65) than they are in Table \ref{tab:comp} (where they go from 0.2 to 0.8). However, variable loss rates naturally have fairness implications. 

Fairness concerns are especially salient if the regimes are correlated with sensitive attributes, such as race, ethnicity, sex, gender, or socioeconomic status. One fairness question could revolve around the \enquote{loss disparity} - the difference in loss rates between multiple regimes. Many papers in algorithmic fairness focus on loss disparity (also called accuracy parity or disparate mistreatment \cite{Zafardisparatemistreatment, Fraenkel}), making them a natural focus in our work. In Table \ref{tab:comp}, there's a loss disparity (difference in loss rates between regimes) of 0.8 for the unaided human and 0.6 for the algorithm. However, the combined system has an loss disparity of 0.02---much lower than either the human or algorithm! Can we guarantee that combined systems will always have lower loss disparities? In Section \ref{sec:fair} we will show that, under some conditions, complementarity implies a bounded loss disparity. Another fairness concern could revolve around whether the benefits of incorporating an algorithm are shared among all groups. For example, in Table \ref{tab:comp}, regime 2 sees a reduction in loss when the algorithm is incorporated, going from 1.15 (unaided human) to 0.44 (combined human with algorithm). However, regime 1 sees an increase in loss, from 0.35 to 0.46. Ideally, a combined system should benefit all regimes---but (when) is this possible? In Section \ref{sec:fair}, we will show that, unfortunately, any system exhibiting complementarity can't be one where all regimes see their loss decrease from what it was with the unaided human. 

\section{Complementarity}\label{sec:complementarity}
In this section, we analyze complementarity: when will a combined human-algorithm system have lower average loss than either the unaided human or algorithm? First, we give general results for when complementarity is impossible to achieve. Secondly, we build on these previous results in order to give constructive examples where complementarity is possible. Finally, we discuss some implications of our findings.  All proofs are given in Appendix \ref{app:proof}. 

\subsection{Cases Where Complementarity is Impossible}

This section gives cases where complementarity is \emph{impossible} to achieve. These results help to narrow the scope of cases that we must consider for future analysis. They could also be helpful for practitioners: if any of their cases is addressed by these lemmas, then they can immediately know that their system can never achieve complementarity. 

We begin by presenting two lemmas that concern the distribution of loss rates for the unaided human or the algorithm. Lemma \ref{lem:constant} considers a case where the loss of the algorithm and unaided human are constant across regimes. Constant loss is of course unlikely to arise in practice, but is often a setting considered in the computational science literature. 
\begin{restatable}{lemma}{constant}
\label{lem:constant}
A human-algorithm system where unaided human and algorithm loss rates are constant over regimes can never achieve complementary performance. 
\end{restatable}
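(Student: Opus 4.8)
The plan is to show that constant loss rates collapse the per-regime boundedness condition into a single inequality that directly rules out the strict gap required by Definition \ref{def:comp}.

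First I would unpack what ``constant loss over regimes'' means in the notation of the model: it is exactly the statement that $\delta_{ai}=0$ and $\delta_{hi}=0$ for every $i$, so that $\alg_i=\Alg$ and $\human_i=\Human$ for all regimes $i\in\{1,\dots,\nspace\}$. Thus every regime presents the combining rule with the identical pair of loss rates $(\Alg,\Human)$.

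Next I would apply the assumptions regime by regime. By Assumption \ref{assump:bounded}, $\comb(\alg_i,\human_i)\ge\min(\alg_i,\human_i)=\min(\Alg,\Human)$ for each $i$; alternatively, by Assumption \ref{assump:lossonly} the combined loss is literally the same constant $\comb(\Alg,\Human)$ in every regime, and this constant is $\ge\min(\Alg,\Human)$ again by Assumption \ref{assump:bounded}. Either way, averaging over regimes with the weights $\prob_i$ (which sum to $1$) gives $\sum_{i=1}^{\nspace}\prob_i \cd \comb(\alg_i,\human_i)\ge\min(\Alg,\Human)$. This contradicts the strict inequality $\sum_{i=1}^{\nspace}\prob_i \cd \comb(\alg_i,\human_i)<\min(\Alg,\Human)$ in Definition \ref{def:comp}, so complementarity cannot hold.

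The main thing to be careful about --- rather than a genuine obstacle --- is simply to flag why the argument is so short: complementarity in this model must come from the combined system beating its own average on some regimes and underperforming on others, which requires variability (some $\delta_{ai}$ or $\delta_{hi}$ nonzero) to exploit. With constant loss there is no such variability, and the weighting-function reformulation (Lemma \ref{lem:equivalent}, Equation \eqref{eq:weighting}) makes this transparent: $\comb(\alg_i,\human_i)$ is always a convex combination $(1-\select(\alg_i,\human_i)) \cd \Alg + \select(\alg_i,\human_i) \cd \Human$ of the two constants $\Alg$ and $\Human$, hence never below $\min(\Alg,\Human)$ regardless of the weights. I would present the averaging argument as the main proof and mention the convexity reformulation as a one-line alternative.
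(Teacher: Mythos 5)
Your proof is correct and follows essentially the same route as the paper's: with constant losses every regime presents the pair $(\Alg,\Human)$, so Assumption \ref{assump:bounded} forces the combined loss in each regime (and hence its $\prob_i$-weighted average) to be at least $\min(\Alg,\Human)$, contradicting the strict inequality in Definition \ref{def:comp}. The convex-combination remark is a fine optional aside but adds nothing beyond the paper's one-line argument.
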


The next result, Lemma \ref{lem:cantdom}, considers the setting where one of the components outperforms the other in every regime. Specifically, it says that if one of the unaided human or algorithm always has lower loss than the other, then complementarity is impossible. 
\begin{restatable}{lemma}{cantdom}
\label{lem:cantdom}
Complementarity is impossible if one of the human or algorithm always weakly dominates the loss of the other: that is, if $\alg_i \leq \human_i$ for all $i$, or $\alg_i \geq \human_i$ for all $i$. 
\end{restatable}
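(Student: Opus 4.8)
The plan is to apply the per-regime bound of Assumption~\ref{assump:bounded} and then take the probability-weighted average over regimes, handling the two domination cases separately. This is the natural move because Definition~\ref{def:comp} is a statement about \emph{average} combined loss, while Assumption~\ref{assump:bounded} controls the combined loss \emph{regime by regime}.

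First suppose $\alg_i \le \human_i$ for every regime $i$. Then $\min(\alg_i,\human_i)=\alg_i$, so Assumption~\ref{assump:bounded} gives $\comb(\alg_i,\human_i)\ge \alg_i$ for each $i$. Multiplying by $\prob_i\ge 0$ and summing over $i$ yields $\sum_{i=1}^{\nspace}\prob_i\cd\comb(\alg_i,\human_i)\ \ge\ \sum_{i=1}^{\nspace}\prob_i\cd\alg_i\ =\ \Alg\ \ge\ \min(\Alg,\Human)$, which contradicts the strict inequality required by Definition~\ref{def:comp}. The case $\alg_i \ge \human_i$ for all $i$ is symmetric: now $\min(\alg_i,\human_i)=\human_i$, Assumption~\ref{assump:bounded} gives $\comb(\alg_i,\human_i)\ge\human_i$ pointwise, and averaging gives combined loss at least $\Human\ge\min(\Alg,\Human)$, again contradicting Definition~\ref{def:comp}. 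Together the two cases establish the lemma.

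I do not expect a genuine obstacle here: the lemma amounts to the observation that a convex combination of numbers each at least a fixed value $v$ is itself at least $v$, and so cannot be strictly below $v = \min(\Alg,\Human)$. The only point requiring a little care is the interplay of strictness — the averaging step produces only a non-strict $\ge$, but that already suffices since complementarity demands a strict $<$ against $\min(\Alg,\Human)$. One could equivalently phrase the argument through the weighting function of Lemma~\ref{lem:equivalent} (under algorithm domination, \emph{any} $\select(\alg_i,\human_i)\in[0,1]$ leaves $\comb(\alg_i,\human_i)=(1-\select(\alg_i,\human_i))\cd\alg_i+\select(\alg_i,\human_i)\cd\human_i\ge\alg_i$), but no new assumption or reformulation is needed. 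A useful byproduct, together with Lemma~\ref{lem:constant}, is that all subsequent analysis may restrict attention to systems in which neither component weakly dominates the other across regimes.
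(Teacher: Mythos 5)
Your proof is correct. It takes a somewhat different, and in fact more streamlined, route than the paper's. The paper first passes to the weighting-function representation $\comb(\alg_i,\human_i)=(1-\select(\alg_i,\human_i))\cd\alg_i+\select(\alg_i,\human_i)\cd\human_i$, normalizes WLOG to $\Alg\le\Human$, and reduces complementarity to $\sum_i \prob_i\cd\select(\alg_i,\human_i)\cd(\human_i-\alg_i)<0$; under algorithm dominance every summand is nonnegative, while under human dominance the WLOG forces $\Alg=\Human$ and a second, symmetric rewriting of the condition relative to $\Human$ is needed to finish. You instead work directly from Assumption~\ref{assump:bounded}: under domination the pointwise lower bound $\comb(\alg_i,\human_i)\ge\min(\alg_i,\human_i)$ collapses to the dominant component, and averaging immediately gives combined loss at least $\Alg$ (resp.\ $\Human$) $\ge\min(\Alg,\Human)$, contradicting the strict inequality in Definition~\ref{def:comp}. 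This avoids the weighting function, the WLOG, and the $\Alg=\Human$ edge case entirely, and correctly notes that a non-strict $\ge$ suffices against the strict $<$ of complementarity. What the paper's route buys in exchange is that the intermediate inequality $\sum_i\prob_i\cd\select(\alg_i,\human_i)\cd(\human_i-\alg_i)<0$ is exactly the form reused in the proofs of Lemmas~\ref{lem:smallNgeneralbound} and~\ref{lem:compN} and in the fairness results, so the argument is written in the machinery the rest of the paper builds on; your version is the more self-contained and elementary proof of the lemma itself.
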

\noindent This result may have implications for tasks where the algorithm has extremely high performance, achieving lower loss than the human for all types of inputs. While this could mean that the combined system will have lower loss than the human alone, Lemma \ref{lem:cantdom} tells us that it can't achieve lower loss than both the human and algorithm.  

Next, we will consider two lemmas that concern properties of the combining rules (the way the human incorporates advice from the unaided human and the algorithm). First, Lemma \ref{lem:convex} analyzes a case where the combining function is convex in its arguments.

\begin{restatable}{lemma}{convex}
\label{lem:convex} 
A combining function $\comb(\alg_i, \human_i)$ that is convex in $\alg_i, \human_i$ can never achieve complementary performance. 
\end{restatable}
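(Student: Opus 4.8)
The plan is to derive the result from Jensen's inequality together with the boundedness assumption. The combined average loss is $\sum_{i=1}^{\nspace} \prob_i \cd \comb(\alg_i, \human_i)$, which is a convex combination (with weights $\prob_i$, summing to $1$) of the values $\comb(\alg_i, \human_i)$. Since $\comb$ is convex in its pair of arguments, Jensen's inequality applied to the points $(\alg_i,\human_i)$ gives
$$\sum_{i=1}^{\nspace} \prob_i \cd \comb(\alg_i, \human_i) \;\geq\; \comb\p{\sum_{i=1}^{\nspace}\prob_i \cd \alg_i,\; \sum_{i=1}^{\nspace}\prob_i \cd \human_i} \;=\; \comb(\Alg, \Human).$$
This is the first and main step.

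Next I would bound $\comb(\Alg,\Human)$ from below by $\min(\Alg,\Human)$. The cleanest way is to invoke Lemma \ref{lem:equivalent}: the combining rule can be written through a weighting function $0 \leq \select(\cd,\cd) \leq 1$, so $\comb(\Alg,\Human) = (1-\select(\Alg,\Human))\cd\Alg + \select(\Alg,\Human)\cd\Human \geq \min(\Alg,\Human)$ no matter the weight. (Equivalently, one can apply Assumption \ref{assump:bounded} directly at the pair $(\Alg,\Human)$, viewing it as the loss profile of a single degenerate regime.) Chaining the two inequalities yields $\sum_{i=1}^{\nspace} \prob_i \cd \comb(\alg_i, \human_i) \geq \min(\Alg,\Human)$, which contradicts the strict inequality required by Definition \ref{def:comp}. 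Hence a convex combining function can never achieve complementarity.

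The only point that needs care is applying the lower bound at the aggregate point $(\Alg,\Human)$ rather than only at the regime points $(\alg_i,\human_i)$: one must note that convexity of $\comb$ presupposes a convex domain, which therefore contains the convex combination $(\Alg,\Human)$, so $\comb(\Alg,\Human)$ is well-defined and the weighting-function representation of Lemma \ref{lem:equivalent} applies there too. Everything else is a one-line invocation of Jensen, so I expect this domain/representation bookkeeping to be the entirety of the ``hard part.''
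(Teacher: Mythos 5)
Your proposal is correct and follows essentially the same argument as the paper: Jensen's inequality gives $\sum_i \prob_i \cd \comb(\alg_i,\human_i) \geq \comb(\Alg,\Human)$, and the boundedness of the combining rule (Assumption \ref{assump:bounded}, applied at the aggregate point $(\Alg,\Human)$) gives $\comb(\Alg,\Human) \geq \min(\Alg,\Human)$, contradicting strict complementarity. Your extra remark about the domain containing $(\Alg,\Human)$ and the weighting-function representation of Lemma \ref{lem:equivalent} is just a more careful bookkeeping of a step the paper leaves implicit, not a different route.
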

\noindent Recalling that the maximum function, $\max(\alg, \human)$, that returns whichever of the unaided human or algorithm has higher loss, is convex, this result is very intuitive. 

A simple but important corollary (Corollary \ref{cor:novary}) is that complementarity is impossible whenever the weighting function $\select(\alg_i, \human_i)$ is constant (independent of the algorithm or human's loss rate). This might reflect a situation where the decision-maker either is ignorant of the loss rates by the human or algorithm, or else decides to ignore them. 

\begin{restatable}{corollary}{novary}
\label{cor:novary}
A combining function with a constant weighting function $\select(\alg_i, \human_i) = \select$ can never achieve complementarity performance. 
\end{restatable}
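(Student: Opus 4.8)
The plan is to reduce this immediately to Lemma~\ref{lem:convex}, with a short self-contained computation as backup. First I would observe that if the weighting function is constant, $\select(\alg_i,\human_i) = \select$, then by Equation~\eqref{eq:weighting} the combining function is
\[
\comb(\alg_i,\human_i) = (1-\select)\cd \alg_i + \select \cd \human_i,
\]
which is affine in $(\alg_i,\human_i)$ and therefore convex. Moreover, $0 \le \select \le 1$ by the weighting-function representation (Lemma~\ref{lem:equivalent}). Hence Lemma~\ref{lem:convex} applies directly and complementarity is impossible.

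For a self-contained argument I would instead compute the average combined loss explicitly. Using linearity of the sum,
\[
\sum_{i=1}^{\nspace} \prob_i \cd \comb(\alg_i,\human_i)
= (1-\select)\sum_{i=1}^{\nspace}\prob_i\cd\alg_i + \select\sum_{i=1}^{\nspace}\prob_i\cd\human_i
= (1-\select)\cd \Alg + \select\cd\Human.
\]
Since $0 \le \select \le 1$, the right-hand side is a convex combination of $\Alg$ and $\Human$, so it lies in the closed interval $[\min(\Alg,\Human),\ \max(\Alg,\Human)]$ and in particular is $\geq \min(\Alg,\Human)$. This contradicts the strict inequality required by Definition~\ref{def:comp}, so no such system can display complementary performance.

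There is essentially no obstacle here: the only points requiring care are (i) invoking the constant-weighting hypothesis correctly so that the sums factor out the scalars $1-\select$ and $\select$, and (ii) noting the strictness — even at the extreme values $\select \in \{0,1\}$ the combined loss only ever \emph{equals} $\min(\Alg,\Human)$, never falls strictly below it. I would also remark that this corollary captures the intuitive case where the decision-maker's reliance on the algorithm does not respond at all to how accurate the human or algorithm is in a given regime, which is precisely the situation Lemma~\ref{lem:convex} already precludes.
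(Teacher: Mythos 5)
Your proposal is correct and matches the paper's own proof, which likewise observes that a constant weight makes $\comb(\alg,\human) = \select\cd\human + (1-\select)\cd\alg$ affine, hence convex, and invokes Lemma~\ref{lem:convex}. Your explicit averaging computation is a fine self-contained alternative, but the main argument is essentially identical to the paper's.
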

\begin{proof}
Note that $\comb(\alg, \human) = \select \cd \human + (1-\select) \cd \alg$ is convex in both $\alg$ and $\human$. 
\end{proof}

These results, taken together, show that any system that could potentially achieve complementarity must have weighting function that varies with the inputs, must have human or algorithmic loss that varies, must have a combining function that is not convex, and cannot have either the human or algorithm dominate the loss of the other.

\subsection{Cases Where Complementarity is Possible: $\nspace = 2$ regimes}\label{sec:comppossn2}

Having shown cases where complementarity is impossible, in this section we give conditions where complementarity is possible. As we described in Section \ref{sec:modelassump}, our notation describes the loss of the unassisted human in regime $i$ by $\human_i = \Human + \delta_{\human i}$ and the loss of the algorithm by $\alg_i = \Alg + \delta_{\alg i}$, where $\Human$ and $\Alg$ are the average losses of the unassisted human and algorithm, respectively. Additionally, $\select(\alg_i, \human_i)$ (as defined in Equation \ref{eq:weighting}) is the weighting function---the weight that the human places on themselves in making a final prediction.

We will first build intuition with the $\nspace=2$ case: there are exactly two regimes, with regime 1 having probability $\prob$ of occurring. By assumption, $\sum_{i=1}^{\nspace} \prob_i \cd \delta_{\alg i} = 0$ (and similarly for unaided human loss). This allows us to simplify the $\nspace =2$ loss distributions into:

$$\text{unaided human loss: } \begin{cases}
\human_1 = \Human + \delta_h\\
\human_2 = \Human - \frac{p}{1-p}\delta_h
\end{cases} \quad 
\text{algorithmic loss: } 
\begin{cases}
\alg_1 = \Alg + \delta_a\\
\alg_2 = \Alg-\frac{p}{1-p}\delta_a
\end{cases} $$

This formulation is without loss of generality because we allow $\delta_a, \delta_h$ to be positive, negative, or zero. In the case that $\delta_a, \delta_h$ are the same sign, loss rates for the human and algorithm are correlated. If they are of a different sign, then loss rates are anti-correlated. $\delta_a, \delta_h$ values of larger magnitude correspond to a more variable distribution of losses.

Lemma \ref{lem:smallNgeneralbound} gives a condition for when a $\nspace =2$ example can achieve complementarity. Note that the condition gives a lower bound on the magnitude of $\vert \delta_a - \delta_h \vert$, terms which reflect the variability in human and algorithmic loss. This bound depends on $\Alg$ and $\Human$, the average loss rates of the algorithm and human. It also depends on $\select(\alg_1, \human_1), \select(\alg_2, \human_2)$, the weighting functions for the human in the 1st and 2nd regime, respectively. Recall that Corollary \ref{cor:novary} tells us that any system exhibiting complementarity cannot have $\select(\alg_1, \human_1) = \select(\alg_2, \human_2)$, because that would imply a constant weighting function.

\begin{restatable}{lemma}{smallNgeneralbound}
\label{lem:smallNgeneralbound}
Consider the case where $\nspace = 2$, and WLOG assume that $\Alg \leq \Human$: the algorithm has lower average loss than the human. Then, the combined system exhibits complementarity whenever: 
$$(\Human-\Alg) \cd \frac{\select(\alg_1, \human_1) + \frac{1-\prob}{\prob} \cd \select(\alg_2, \human_2)}{\vert \select(\alg_2, \human_2) - \select(\alg_1, \human_1)\vert} < \vert \delta_a -\delta_h\vert$$
\end{restatable}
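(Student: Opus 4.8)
The plan is to write the average loss of the combined system explicitly in the $\nspace=2$ parametrization and compare it against $\min(\Alg,\Human)=\Alg$ (which holds since $\Alg\le\Human$). Using the weighting representation of Equation~\ref{eq:weighting}, and abbreviating $s_i := \select(\alg_i,\human_i)$, the combined loss in regime $i$ is $(1-s_i)\alg_i+s_i\human_i=\alg_i+s_i(\human_i-\alg_i)$. Substituting $\human_1=\Human+\delta_h$, $\human_2=\Human-\tfrac{\prob}{1-\prob}\delta_h$, $\alg_1=\Alg+\delta_a$, $\alg_2=\Alg-\tfrac{\prob}{1-\prob}\delta_a$ and taking the $\prob$-weighted average, the bare $\alg_i$ pieces average to exactly $\Alg$ (the $\delta_a$ contributions cancel, because $\sum_i\prob_i\delta_{ai}=0$), and after collecting the terms that carry an $s_i$ I expect the identity
\[
\textstyle\sum_{i=1}^{2}\prob_i\comb(\alg_i,\human_i)=\Alg+(\Human-\Alg)\bigl(\prob s_1+(1-\prob)s_2\bigr)+\prob(\delta_h-\delta_a)(s_1-s_2).
\]

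Given this identity, complementarity (Definition~\ref{def:comp}) is precisely the assertion that the two trailing terms sum to something strictly negative. The term $(\Human-\Alg)\bigl(\prob s_1+(1-\prob)s_2\bigr)$ is nonnegative, using $\Alg\le\Human$ and $s_i\in[0,1]$, so complementarity forces the ``cross term'' $\prob(\delta_h-\delta_a)(s_1-s_2)$ to be negative and strictly larger in magnitude. Writing $\bigl|\prob(\delta_h-\delta_a)(s_1-s_2)\bigr|=\prob\,|\delta_a-\delta_h|\,|s_2-s_1|$, rearranging, and dividing through by $\prob\,|s_2-s_1|$ --- legitimate precisely because a constant weighting function $s_1=s_2$ can never yield complementarity (Corollary~\ref{cor:novary}) --- produces exactly the stated bound $(\Human-\Alg)\cdot\frac{s_1+\frac{1-\prob}{\prob}s_2}{|s_2-s_1|}<|\delta_a-\delta_h|$.

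The one delicate point, and the step I would guard most carefully, is the orientation of the cross term: the displayed inequality constrains only $|\delta_a-\delta_h|$, whereas to actually deduce complementarity one also needs $(\delta_h-\delta_a)$ and $(s_1-s_2)$ to have opposite signs --- i.e.\ the human must receive the larger weight in whichever regime it is the comparatively more accurate predictor. I would therefore state this orientation alongside the magnitude bound; it is exactly the ``appropriate reliance'' effect illustrated by Scenarios~2 and 3 in Section~\ref{sec:motivate}, and because the signs of $\delta_a$ and $\delta_h$ are unconstrained in this parametrization such a configuration always exists. Everything else --- the substitution and the collection of terms yielding the displayed identity --- is routine algebra that I would carry out and then simply read the conclusion off of.
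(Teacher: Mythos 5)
Your proposal is correct and takes essentially the same route as the paper's proof: you derive the same identity for the combined average loss, reduce complementarity to the same intermediate inequality $(\Human-\Alg)\bigl(\prob\,\select(\alg_1,\human_1)+(1-\prob)\,\select(\alg_2,\human_2)\bigr) < \prob\,(\delta_a-\delta_h)\bigl(\select(\alg_1,\human_1)-\select(\alg_2,\human_2)\bigr)$, and divide by $\prob\,\vert \select(\alg_2,\human_2)-\select(\alg_1,\human_1)\vert$, justified because a constant weighting function precludes complementarity. The orientation caveat you flag (that the magnitude bound alone is necessary, and sufficiency additionally requires $(\delta_a-\delta_h)$ and $(\select(\alg_1,\human_1)-\select(\alg_2,\human_2))$ to share a sign, i.e.\ appropriate reliance) is exactly what the paper's proof addresses through its two-case sign analysis, so your treatment matches the paper's.
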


\begin{figure}
    \centering
    \includegraphics[width=3.5in]{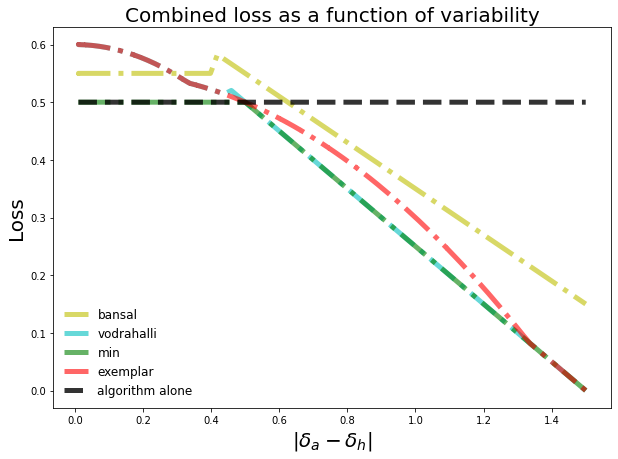}
    \caption{Low combined loss occurs for high variability (large $\vert \delta_a- \delta_h\vert $). The plot displays combined loss for a $N=2$ system, for four combining functions. The black dashed line gives loss of algorithm alone: since for this setting the algorithm has lower error than the unaided human, complementarity occurs below this line.  Note that for all four combining functions, complementarity occurs where $\vert \delta_a - \delta_h\vert$ is large. For details on the combining functions, including the exemplar function (original to this work), see Appendix \ref{app:modelingrules}.}
    \label{fig:comb_err}
    \vspace{-0.3cm}
\end{figure}

Figure \ref{fig:comb_err} displays the loss of the combined system for four different possible combining functions (models of how humans incorporate algorithmic advice). For each of them, complementarity occurs when $\vert \delta_a - \delta_h \vert$ is high, as Lemma \ref{lem:smallNgeneralbound} suggests. However, as Lemma \ref{lem:correrr} states,  this does \emph{not} mean that unaided human and algorithm loss need to be anti-correlated.

\begin{table}[]
\centering 
\begin{tabular}{|c|c|c|c|c|c|}
\hline
                        & \textbf{(Unaided) human} & \textbf{Algorithm} & \textbf{\begin{tabular}[c]{@{}c@{}}Combined\\ (human using algorithm)\end{tabular}}  &    \textbf{\begin{tabular}[c]{@{}c@{}}Weight\\ (on unaided human)\end{tabular}}\\ \hline
\textbf{Regime 1} & 1.15          & 0.2               & 0.44          & 0.25                              \\ \hline
\textbf{Regime 2} & 0.35            & 0.8                & 0.46            &0.75                              \\ \hline 
\textbf{Average}  & 0.75              & 0.5                & 0.45                  &0.5                        \\ \hline
\end{tabular}
\caption{(Reproduced version of Table \ref{tab:comp}). The unaided human has loss 0.75, the algorithm has loss 0.5, and the combined system has loss 0.45 (complementary performance). In this table, we have $\delta_a = 0.4, \delta_h = -0.3$, for $\vert \delta_a - \delta_h \vert = 0.7$. Note that here, losses are anti-correlated: the unaided human has higher loss in regime 1, while the algorithm has higher loss in regime 2. }
\label{tab:compcopied}

\centering 
\begin{tabular}{|c|c|c|c|c|c|}
\hline
                        & \textbf{(Unaided) human} & \textbf{Algorithm} & \textbf{\begin{tabular}[c]{@{}c@{}}Combined\\ (human using algorithm)\end{tabular}}  &    \textbf{\begin{tabular}[c]{@{}c@{}}Weight\\ (on unaided human)\end{tabular}}\\ \hline
\textbf{Regime 1} & 1.48          &   0.53             & 0.77          & 0.25                              \\ \hline
\textbf{Regime 2} & 0.02            & 0.47                & 0.13            &0.75                              \\ \hline 
\textbf{Average}  & 0.75              & 0.5                & 0.45                  &0.5                        \\ \hline
\end{tabular}
\caption{Correlated loss: The combined system (human using algorithm) displays complementary performance even though the unaided human and the algorithm both have higher loss for regime 1. In this table, we have $\delta_a = 0.73, \delta_h = 0.03$, for $\vert \delta_a - \delta_h \vert = 0.7$.}
\label{tab:corr}
\vspace{-0.75cm}
\end{table}

\begin{lemma}\label{lem:correrr}
A system can exhibit complementarity even if the unaided human and the algorithm have correlated loss (both have higher loss in the same regime). 
\end{lemma}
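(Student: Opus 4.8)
The statement is an existence claim, so the plan is to exhibit one explicit $\nspace = 2$ system in which the unaided human and the algorithm both incur higher loss on the same regime, yet the combined system is complementary. Rather than checking complementarity by hand I would invoke Lemma~\ref{lem:smallNgeneralbound}, which (for $\Alg \leq \Human$) guarantees complementarity as soon as
$$(\Human-\Alg) \cd \frac{\select(\alg_1, \human_1) + \frac{1-\prob}{\prob} \cd \select(\alg_2, \human_2)}{\vert \select(\alg_2, \human_2) - \select(\alg_1, \human_1)\vert} < \vert \delta_a -\delta_h\vert .$$
The key observation is that this condition constrains only the \emph{magnitude} $\vert \delta_a - \delta_h\vert$, together with the weights and the average losses; the individual signs of $\delta_a$ and $\delta_h$ are free. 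In the $\nspace = 2$ parametrization $\human_1 - \human_2 = \delta_h/(1-\prob)$ and $\alg_1 - \alg_2 = \delta_a/(1-\prob)$, so $\human_1 > \human_2 \iff \delta_h > 0$ and $\alg_1 > \alg_2 \iff \delta_a > 0$; hence ``the unaided human and the algorithm have correlated loss'' is exactly the statement that $\delta_a$ and $\delta_h$ share a sign. It therefore suffices to choose same-sign $\delta_a, \delta_h$ whose difference is large in magnitude.

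Concretely, I would take $\prob = 1/2$ (so $\tfrac{1-\prob}{\prob} = 1$), $\Alg = 1/2$, $\Human = 3/4$, and weights $\select(\alg_1, \human_1) = 1/4$ and $\select(\alg_2, \human_2) = 3/4$. These weights are admissible: the two regimes have distinct loss pairs, so Assumption~\ref{assump:lossonly} imposes no constraint; they lie in $[0,1]$ as Assumption~\ref{assump:bounded} requires (via Lemma~\ref{lem:equivalent}); and they differ, as Corollary~\ref{cor:novary} demands of any complementary system. The left-hand side of the displayed inequality then equals $\tfrac14 \cd \tfrac{1/4 + 3/4}{1/2} = \tfrac12$, so any $\delta_a, \delta_h$ with $\vert \delta_a - \delta_h\vert > 1/2$ produces complementarity. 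Picking both positive, say $\delta_h = 0.73$ and $\delta_a = 0.03$, gives $\vert \delta_a - \delta_h\vert = 0.7 > 1/2$. This yields $\human_1 = 1.48 > \human_2 = 0.02$ and $\alg_1 = 0.53 > \alg_2 = 0.47$, so both the unaided human and the algorithm have strictly higher loss on regime~$1$ (correlated loss), while the combined per-regime losses $\comb(\alg_i, \human_i) = (1 - \select(\alg_i, \human_i))\alg_i + \select(\alg_i, \human_i)\human_i$ equal $0.7675$ and $0.1325$, for an average of $0.45 < 0.5 = \min(\Alg, \Human)$. By Lemma~\ref{lem:smallNgeneralbound} this system is complementary; it is exactly the one tabulated in Table~\ref{tab:corr}.

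There is no real obstacle here: the content is the decoupling observation that the complementarity criterion of Lemma~\ref{lem:smallNgeneralbound} sees $\delta_a, \delta_h$ only through $\vert \delta_a - \delta_h\vert$. The only care needed is bookkeeping --- verifying that the weights lie in $[0,1]$, that all loss values are nonnegative, that the bounded-combiner inequality $\min(\alg_i, \human_i) \leq \comb(\alg_i,\human_i) \leq \max(\alg_i, \human_i)$ holds in each regime (immediate from the convex-combination form in Equation~\eqref{eq:weighting}), and that same-sign $\delta_a, \delta_h$ genuinely forces the larger loss into the same regime for both agents (which is the $\iff$ noted above). One could equally bypass Lemma~\ref{lem:smallNgeneralbound} and just compute the three average losses directly, a one-line check; I would present the Lemma-based version since it also makes transparent \emph{why} such correlated examples exist --- namely whenever the two weights are sufficiently spread apart and the losses sufficiently variable relative to $\Human - \Alg$.
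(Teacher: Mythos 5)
Your proposal is correct and takes essentially the same route as the paper: the paper proves this existence claim by exhibiting the explicit example in Table~\ref{tab:corr}, and you construct exactly that example (with the same $\delta_a = 0.03$, $\delta_h = 0.73$, weights $0.25$ and $0.75$, yielding combined loss $0.45 < 0.5$). Your additional cross-check via Lemma~\ref{lem:smallNgeneralbound} and the sign/correlation equivalence is a harmless elaboration of the same argument, not a different approach.
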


This is a reassuring result because, otherwise, complementarity would not be possible in settings where a given regime is fundamentally harder than another. For example, one regime might be low-resolution images while another might be high-resolution images, so both humans and algorithms are expected to perform worse in the low-resolution regime.  In our notation, if $\delta_a>0$ and $\delta_h > 0$, then both the unaided human and algorithm have lower loss in regime 2. If $\vert \delta_a - \delta_h \vert$ is large, complementarity still may be possible. However, because $\delta_a, \delta_h$ have the same signs, the variability must be larger in order to make $\vert \delta_a - \delta_h \vert$ satisfy the lower bound. 

Tables \ref{tab:compcopied} and \ref{tab:corr} prove Lemma \ref{lem:correrr} through an example. Table \ref{tab:compcopied} is a copy of Scenario 3 (Table \ref{tab:comp}) from the motivating example in Section \ref{sec:motivate}. The values in the table correspond to $\delta_a = 0.4, \delta_h = -0.3$ (giving $\vert \delta_a - \delta_h \vert = 0.7$). Here, the losses are anti-correlated: the unaided human has higher loss in regime 1, while the algorithm has higher loss in regime 2. 

Table \ref{tab:corr} gives an example where the losses are correlated: both the unaided human and algorithm have higher loss for regime 1. The values in the table are given by $\delta_a = 0.03, \delta_h = 0.73$, which again gives $\vert \delta_a - \delta_h \vert = 0.7$. Even though losses are correlated, the overall system still displays complementarity because $\vert \delta_a - \delta_h \vert$ has remained the same, as Lemma \ref{lem:smallNgeneralbound} suggests.

\subsection{Cases Where Complementarity is Possible: $\nspace > 2$ regimes}\label{sec:comppossnbig}

Finally, we will consider the general $\nspace>2$ case. 

\begin{restatable}{lemma}{compN}
\label{lem:compN} 
WLOG, assume that $\Alg \leq \Human$: the algorithm has lower loss, on average. Then, the condition below gives necessary and sufficient conditions for complementarity of the human-algorithm system:
$$(\Human - \Alg) \cd \sum_{i=1}^{\nspace}\prob_i \cd \select(\alg_i, \human_i) < \sum_{i=1}^{\nspace} \prob_i\cd \select(\alg_i, \human_i)  \cd (\delta_{\alg i} - \delta_{\human i})$$
If we view $\select(\alg_i, \human_i)$ and $\delta_{\alg i}, \delta_{\human i}$ as random variables over the instance space with probability mass governed the distribution of instances given by $\{\prob_i\}$, then we can interpret the condition as: 
$$(\Human - \Alg) \cd \mathbb{E}[\select(\alg_i, \human_i)] < Cov\p{\select(\alg_i, \human_i), \delta_{\alg i} - \delta_{\human i}}$$
where $Cov(\cd)$ gives the covariance. 
\end{restatable}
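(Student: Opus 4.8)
The plan is to work directly with the weighting-function representation of the combining rule from Equation \ref{eq:weighting} and reduce the complementarity inequality to the stated bound by elementary algebra. By Lemma~\ref{lem:equivalent} we may write $\comb(\alg_i,\human_i) = \alg_i + \select(\alg_i,\human_i)\cd(\human_i - \alg_i)$. Averaging over regimes and using $\sum_i \prob_i \cd \alg_i = \Alg$ gives
$$\sum_{i=1}^{\nspace}\prob_i\cd\comb(\alg_i,\human_i) = \Alg + \sum_{i=1}^{\nspace}\prob_i\cd\select(\alg_i,\human_i)\cd(\human_i-\alg_i).$$
Since $\Alg \leq \Human$ by assumption, $\min(\Alg,\Human) = \Alg$, so complementarity (Definition~\ref{def:comp}) holds exactly when the second summand is strictly negative, i.e.\ $\sum_i \prob_i\cd\select(\alg_i,\human_i)\cd(\human_i - \alg_i) < 0$. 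Note this also immediately gives loss strictly below $\Human$, so nothing further is needed for that half of the $\min$.

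Next I would substitute $\human_i - \alg_i = (\Human - \Alg) + (\delta_{\human i} - \delta_{\alg i})$, split the sum, and move the first piece to the other side. This converts $\sum_i \prob_i\cd\select(\alg_i,\human_i)\cd(\human_i - \alg_i) < 0$ into exactly the displayed inequality
$$(\Human-\Alg)\cd\sum_{i=1}^{\nspace} \prob_i\cd\select(\alg_i,\human_i) < \sum_{i=1}^{\nspace} \prob_i\cd\select(\alg_i,\human_i)\cd(\delta_{\alg i}-\delta_{\human i}),$$
establishing the first form of the claim.

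For the covariance reformulation, I would use the defining constraints $\sum_i \prob_i\cd\delta_{\alg i} = 0$ and $\sum_i \prob_i\cd\delta_{\human i} = 0$, which say precisely that the random variable $\delta_{\alg i} - \delta_{\human i}$ has mean zero under the instance distribution $\{\prob_i\}$. Hence $Cov\p{\select(\alg_i,\human_i),\delta_{\alg i}-\delta_{\human i}} = \mathbb{E}\br{\select(\alg_i,\human_i)\cd(\delta_{\alg i}-\delta_{\human i})} - \mathbb{E}\br{\select(\alg_i,\human_i)}\cd 0 = \sum_i \prob_i\cd\select(\alg_i,\human_i)\cd(\delta_{\alg i}-\delta_{\human i})$, while $\sum_i \prob_i\cd\select(\alg_i,\human_i) = \mathbb{E}\br{\select(\alg_i,\human_i)}$. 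Substituting these into the inequality above yields $(\Human-\Alg)\cd\mathbb{E}\br{\select(\alg_i,\human_i)} < Cov\p{\select(\alg_i,\human_i),\delta_{\alg i}-\delta_{\human i}}$, as stated.

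There is no real obstacle here: the whole argument is one substitution plus a rearrangement, and the only things to watch are the sign bookkeeping (that $\human_i - \alg_i$ decomposes with a $+(\Human-\Alg)$ term while the final bound is phrased via $\delta_{\alg i}-\delta_{\human i}$) and the observation that bounding the combined loss below $\Alg$ already suffices because $\Alg = \min(\Alg,\Human)$. As a consistency check I would verify that specializing to $\nspace=2$ with $\delta_{\alg 2} = -\tfrac{\prob}{1-\prob}\delta_a$ and $\delta_{\human 2} = -\tfrac{\prob}{1-\prob}\delta_h$ recovers Lemma~\ref{lem:smallNgeneralbound} after dividing through by $\prob$ and choosing the sign of $\delta_a-\delta_h$ that makes the right-hand side positive.
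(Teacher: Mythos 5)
Your proposal is correct and follows essentially the same route as the paper's proof: expand $\comb$ via the weighting function, reduce complementarity (using $\min(\Alg,\Human)=\Alg$) to $\sum_i \prob_i\,\select(\alg_i,\human_i)\,(\human_i-\alg_i)<0$, substitute $\human_i-\alg_i=(\Human-\Alg)+(\delta_{\human i}-\delta_{\alg i})$, and rearrange. Your explicit verification of the covariance reformulation via the mean-zero property of $\delta_{\alg i}-\delta_{\human i}$ is a small addition the paper leaves implicit, but the argument is the same.
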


Lemma \ref{lem:compN} gives conditions on complementarity, requiring that the weighting function $\select(\alg_i, \human_i)$ have high covariance with the difference between $\delta_{\alg i}$ and $\delta_{\human i}$. Intuitively, this means that when the algorithm is more \emph{above} its typical loss than the unaided human (when $\delta_{\alg i} > \delta_{\human i}$), then the combined loss should rely more heavily on the unaided human ($\select(\alg_i, \human_i)$ should be large). Conversely, if the algorithm is more \emph{below} its typical loss than the unaided human (when $\delta_{\alg i} < \delta_{\human i}$), then the combined loss should rely more heavily on the algorithm ($\select(\alg_i, \human_i)$ should be small). The lefthand side of the equation lower bounds how large this covariance must, as a function of the gap between the average loss of the unaided human and the algorithm, and the expected value of the weighting function. 

The results in this section imply that unaided human and algorithmic error rates must be highly variable. One question this raises is about achievability: is it even possible to arrange human and algorithmic loss rates like this? For example, achieving a highly variable loss rate for an algorithm may require retraining, and for certain cases may not be possible. Manipulating the loss rate for a human may be possible through re-assigning human effort: for example, assigning multiple humans to certain portions of the input space in order to reduce loss. 

Even if highly variable loss rates are possible, they may not be desirable for other reasons. In particular, the next section discusses the fairness implications of complementarity. 

\section{Fairness}\label{sec:fair}
There are numerous possible notions of fairness that could be relevant for a human-algorithm classifier. In this section, we will analyze three of them and describe how they relate to complementarity. (All proofs are given in Appendix \ref{app:proof}). In general, the notions of fairness concern disparities in loss across different regimes. Disparities in loss rates could be alarming (if they align or correlate with sensitive attributes, such as race, gender, or socioeconomic status) or innocuous (if they are unrelated to sensitive attributes). In this paper, we will consider two general classes of fairness concerns: \emph{fairness of benefit} and \emph{loss disparity} rates. Fairness of benefit relates to which regimes see their loss rate decrease when the algorithm is incorporated into the decision-making process. Ideally, all regimes would benefit from human-algorithm collaboration.  We will show that is unfortunately not possible. Loss disparity rates relate to the gap in loss rates between different regimes: ideally, this gap would be small, which would reflect equal loss rates for all regimes. In this work, we will show that complementarity can sometimes help bound loss disparity.

\subsection{Fairness of benefit}

First, Definition \ref{def:fairben} says that a system has \enquote{fairness of benefit} if all regimes experience a lower loss from the combined human-algorithm system than they would experience with the unassisted human. 
\begin{definition}[Fairness of benefit]\label{def:fairben}
A human/algorithm system exhibits \emph{fairness of benefit} if all regimes benefit from the combined system (experience a lower loss than with the unaided human). 
\end{definition}
\noindent This notion captures the desideratum that the benefits resulting from switching to a combined human-algorithm system are shared by all. However, Lemma \ref{lem:fairbennotcomp} shows that this notion of fairness is incompatible with complementarity. 

\begin{lemma}\label{lem:fairbennotcomp}
Any system exhibiting fairness of benefits cannot have complementary performance. 
\end{lemma}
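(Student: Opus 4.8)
The plan is to show that fairness of benefit forces the algorithm to weakly dominate the unaided human in every regime, and then to invoke Lemma~\ref{lem:cantdom}.

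First I would unpack the definition. Fairness of benefit means that in every regime the combined loss strictly improves on the unaided human, i.e. $\comb(\alg_i,\human_i) < \human_i$ for all $i$. By the lower bound in Assumption~\ref{assump:bounded} we also have $\comb(\alg_i,\human_i) \geq \min(\alg_i,\human_i)$. Chaining these two inequalities gives $\min(\alg_i,\human_i) < \human_i$ for every $i$, which is possible only if $\alg_i < \human_i$. Hence fairness of benefit implies $\alg_i \leq \human_i$ for all $i$ --- the algorithm weakly (in fact strictly) dominates the human everywhere.

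At this point Lemma~\ref{lem:cantdom} applies verbatim: since $\alg_i \leq \human_i$ for all $i$, complementarity is impossible, which is exactly the claim. If a self-contained argument is preferred, observe that $\alg_i < \human_i$ for all $i$ together with Assumption~\ref{assump:bounded} gives $\comb(\alg_i,\human_i) \geq \alg_i$ in every regime, so $\sum_{i=1}^{\nspace}\prob_i\cd\comb(\alg_i,\human_i) \geq \Alg$; and $\alg_i<\human_i$ everywhere forces $\Alg<\Human$, so $\min(\Alg,\Human)=\Alg$, contradicting the strict inequality demanded by Definition~\ref{def:comp}.

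I do not expect a genuine obstacle here; the only point requiring care is the strict-versus-weak reading of ``benefit'' in Definition~\ref{def:fairben}. The argument above uses the strict reading $\comb(\alg_i,\human_i)<\human_i$, which is the intended one: a regime with $\alg_i\geq\human_i$ cannot strictly benefit, since its combined loss is pinned at or above $\human_i$ by Assumption~\ref{assump:bounded}, and it is precisely the existence of some regime with $\alg_i > \human_i$ that complementarity requires (again by Lemma~\ref{lem:cantdom}).
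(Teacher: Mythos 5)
Your proposal is correct and follows essentially the same route as the paper: both arguments deduce from fairness of benefit that $\alg_i < \human_i$ in every regime and then invoke Lemma~\ref{lem:cantdom} to rule out complementarity. The only cosmetic difference is that you extract the domination directly from the lower bound in Assumption~\ref{assump:bounded}, whereas the paper passes through the weighting-function form $\comb(\alg_i,\human_i)=(1-\select(\alg_i,\human_i))\alg_i+\select(\alg_i,\human_i)\human_i$; these are equivalent by Lemma~\ref{lem:equivalent}, so this is the same argument.
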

\begin{proof}
This result is largely due to Lemma \ref{lem:cantdom}. Note that we have: 
$$\comb(\alg_i, \human_i) = (1-\select(\alg_i, \human_i)) \cd \alg_i + \select(\alg_i, \human_i) \cd \human_i$$ 
for $s_i \in [0, 1]$. Then, if we have fairness of benefits, we know:
$$(1-\select(\alg_i, \human_i)) \cd \alg_i + \select(\alg_i, \human_i) \cd \human_i < \human_i \ \forall i \in [N]$$
In order to achieve this, we need $\select(\alg_i, \human_i) <1$ and $\alg_i < \human_i$, for all $i \in [\nspace]$. However, this is exactly the condition addressed in Lemma \ref{lem:cantdom}: the algorithm always has lower loss than the unassisted human, which means that complementarity is impossible. 
\end{proof}

Note that Definition \ref{def:fairben} could be defined symmetrically, defining \enquote{benefit} as a reduction in loss as compared to the algorithmic prediction. In this case, an analogous version of Lemma \ref{lem:fairbennotcomp} would hold, similarly showing that this notion of fairness is incompatible with complementarity.

Lemma \ref{lem:fairbennotcomp} tells us that there is an inherent tension between achieving complementarity and ensuring all people who use the system see their loss rates decrease. For certain application areas, fairness of benefit might be more important than complementarity, so practitioners might consciously choose to prioritize it. On the other hand, a practitioner who opts to achieve complementarity instead of fairness of benefit might wish to consider alternate ways to support any groups that see their loss increase in the combined system.

\subsection{Loss disparity}

Next, Definition \ref{def:errdisp} describes fairness as the disparity in loss rates between different regimes. Intuitively, this notion relates to group-based notions of fairness: loss rates should be relatively similar between members of different groups. However, Lemma \ref{lem:fairinputbound} again describes how this notion of fairness is in tension with complementarity, which puts a lower bound on the level of this kind of unfairness. 

\begin{definition}[Loss disparity]\label{def:errdisp}
A prediction system exhibits $\epsilon$-loss disparity if the losses in different regimes differ by no more than $\epsilon$. We will use $\epsilon_h, \epsilon_a, \epsilon_c$ to refer to the loss disparity of the unaided human, algorithm alone, and combined human-algorithm system, respectively. 
\end{definition}

\begin{restatable}{lemma}{fairinputbound}
\label{lem:fairinputbound}
WLOG assume that $\Alg \leq \Human$: the algorithm has lower average loss than the human. Then, any system exhibiting complementarity has a lower bound on $\epsilon_a+ \epsilon_h$: the combined loss disparity of the unaided human and algorithm.
$$\Alg-C + (\Human-\Alg) \cd \sum_{i=1}^{\nspace} \prob_i \cd \select(\alg_i, \human_i) < \epsilon_\alg + \epsilon_h$$
\end{restatable}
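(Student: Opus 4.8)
The plan is to reduce the claimed bound to an elementary fact about the per-regime deviations $\delta_{\alg i}=\alg_i-\Alg$ and $\delta_{\human i}=\human_i-\Human$, using the weighting representation guaranteed by Lemma~\ref{lem:equivalent}. Write $C=\sum_{i=1}^{\nspace}\prob_i\comb(\alg_i,\human_i)$ for the average combined loss and $s_i:=\select(\alg_i,\human_i)\in[0,1]$. Starting from $\comb(\alg_i,\human_i)=(1-s_i)\alg_i+s_i\human_i$ (Equation~\eqref{eq:weighting}) and substituting $\human_i-\alg_i=(\Human-\Alg)+(\delta_{\human i}-\delta_{\alg i})$, expanding and collecting terms (and using $\sum_i\prob_i\alg_i=\Alg$) gives the identity
$$C=\Alg+(\Human-\Alg)\sum_{i=1}^{\nspace}\prob_i s_i+\sum_{i=1}^{\nspace}\prob_i s_i(\delta_{\human i}-\delta_{\alg i}),$$
equivalently
$$\Alg-C+(\Human-\Alg)\sum_{i=1}^{\nspace}\prob_i s_i=\sum_{i=1}^{\nspace}\prob_i s_i(\delta_{\alg i}-\delta_{\human i}).$$
So the left-hand side of the lemma is exactly $\sum_i\prob_i s_i(\delta_{\alg i}-\delta_{\human i})$, and it remains to show this quantity is strictly less than $\epsilon_\alg+\epsilon_h$.

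For the bound, the key observation is that since $\Alg$ is a $\prob$-weighted average of the $\alg_i$ we have $\min_i\alg_i\le\Alg\le\max_i\alg_i$, hence $|\delta_{\alg i}|\le\max_i\alg_i-\min_i\alg_i=\epsilon_\alg$ for every regime $i$, and symmetrically $|\delta_{\human i}|\le\epsilon_h$. Then, using $0\le s_i\le 1$ and the triangle inequality,
$$\sum_{i=1}^{\nspace}\prob_i s_i(\delta_{\alg i}-\delta_{\human i})\le\sum_{i=1}^{\nspace}\prob_i\bigl(|\delta_{\alg i}|+|\delta_{\human i}|\bigr)\le\sum_{i=1}^{\nspace}\prob_i(\epsilon_\alg+\epsilon_h)=\epsilon_\alg+\epsilon_h,$$
which, combined with the identity, yields the claimed inequality in non-strict form.

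To upgrade to strictness, note that the bound $|\delta_{\alg i}|\le\epsilon_\alg$ can be attained only if $\Alg$ equals $\min_i\alg_i$ or $\max_i\alg_i$, which for an average forces $\epsilon_\alg=0$ (and likewise on the $\human$ side); so equality in the chain above requires $\epsilon_\alg=\epsilon_h=0$, i.e.\ both loss profiles constant across regimes. But Lemma~\ref{lem:constant} says complementarity is impossible in that case, so under the lemma's hypothesis the inequality is strict. This is the only place complementarity enters; it is also what makes the bound non-vacuous, since complementarity forces $C<\min(\Alg,\Human)=\Alg$, so $\Alg-C>0$ and the left side is a genuinely positive lower bound on $\epsilon_\alg+\epsilon_h$. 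I expect the only real work here to be bookkeeping — getting the identity right and dispatching the degenerate constant-loss case for strictness — since all of the per-regime boundedness content (Assumption~\ref{assump:bounded}) has already been packaged into $s_i\in[0,1]$ by Lemma~\ref{lem:equivalent}.
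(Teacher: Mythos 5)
Your proposal is correct and follows essentially the same route as the paper: both derive the identity $\Alg-C+(\Human-\Alg)\sum_i \prob_i \select(\alg_i,\human_i)=\sum_i \prob_i \select(\alg_i,\human_i)(\delta_{\alg i}-\delta_{\human i})$ from the weighting representation and then bound that sum by $\epsilon_\alg+\epsilon_h$, the only difference being that you bound it via per-regime triangle inequality ($|\delta_{\alg i}|\leq\epsilon_\alg$, $|\delta_{\human i}|\leq\epsilon_h$) where the paper passes through the intermediate quantity $\delta_{\alg+}-\delta_{\human-}$ using the signs of the extreme deviations. A minor point in your favor: you make the strictness explicit by analyzing the equality case and invoking Lemma~\ref{lem:constant}, whereas the paper simply asserts the strict inequality $\delta_{\alg+}-\delta_{\human-}<\epsilon_\alg+\epsilon_h$, which implicitly relies on ruling out the same constant-loss degenerate case.
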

Lemma \ref{lem:fairinputbound} should match our intuition from  Section \ref{sec:complementarity}. There, results indicated that complementarity is easiest to achieve when loss rates are highly variable. However, this directly contradicts with the goal of minimizing loss disparity. 

Next, we will consider the loss disparity of the combined human-algorithm system. Unfortunately, it is possible for a system exhibiting complementarity to \emph{exacerbate} unfairness. Table \ref{tab:n3counterex} gives an example where the combined system has a higher loss disparity than either the unaided human or the algorithm, even though it exhibits complementarity. This result (which parallels results from \cite{dwork2018fairness, dwork2020individual, wang2019practical}) means that practitioners must be careful when discussing fairness of combined systems: fairness guarantees from the individual components don't necessarily transfer to the combined human-algorithm system. However, Lemma \ref{lem:fairupbound} gives a condition where the loss disparity of the combined human-algorithm system is upper bounded by the loss disparity of the unaided human or the algorithm. 

\begin{table}[]
\centering 
\begin{tabular}{|c|c|c|c|}
\hline
\textbf{}               & \textbf{(Unaided) Human} & \textbf{Algorithm} & \textbf{Combined (human using algorithm)} \\ \hline
\textbf{Regime 1} & 0.95           & 0.85               & 0.895                   \\ \hline
\textbf{Regime 2} & 0.95           & 0.02               & 0.05                    \\ \hline
\textbf{Regime 3} & 0.15           & 0.45               & 0.255                   \\ \hline
\textbf{Average loss}  & 0.68           & 0.44               & 0.40                    \\ \hline
\end{tabular}
\caption{This system exhibits complementarity, since average loss is lowest in the combined human-algorithm system. However, note that loss disparity is \emph{increased} in the combined system: $\epsilon_h = 0.8, \epsilon_a = 0.83$, but $\epsilon_c = 0.84$. }
\label{tab:n3counterex}
\vspace{-0.8cm}
\end{table}

\begin{restatable}{lemma}{fairupbound}
\label{lem:fairupbound}
Define $i+$ as the regime where the combined human-algorithm system has highest loss and $i-$ as the regime where it has lowest loss. Then, the loss disparity of the combined system is upper bounded by the loss disparity of the unaided human or algorithm, so long as neither the unaided human or algorithm dominates the other in both $i+, i-$. That is, 
$$\text{If either case is satisfied: }\begin{cases}
\human_{i+} \leq\alg_{i+} \text{ and } \human_{i-} \geq\alg_{i-}\\
\human_{i+} \geq\alg_{i+} \text{ and } \human_{i-} \leq\alg_{i-}\\
\end{cases} \quad \Rightarrow \quad \epsilon_c \leq \max(\epsilon_a, \epsilon_h)$$
\end{restatable}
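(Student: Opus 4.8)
The plan is to reduce the statement to the per-regime bound of Assumption~\ref{assump:bounded}, combined with a two-way case split on which of the human or algorithm is the larger component in the extreme regimes $i+$ and $i-$. The point is that a bounded combiner can never spread the combined losses further apart than whichever of the two underlying components ``straddles'' them.

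First I would record that, because $i+$ and $i-$ are by definition the regimes on which the combined system attains its largest and smallest loss, the combined loss disparity is exactly $\epsilon_c = \comb(\alg_{i+},\human_{i+}) - \comb(\alg_{i-},\human_{i-})$. Applying Assumption~\ref{assump:bounded} in regime $i+$ gives $\comb(\alg_{i+},\human_{i+}) \le \max(\alg_{i+},\human_{i+})$, and in regime $i-$ gives $\comb(\alg_{i-},\human_{i-}) \ge \min(\alg_{i-},\human_{i-})$, so that
$$\epsilon_c \;\le\; \max(\alg_{i+},\human_{i+}) - \min(\alg_{i-},\human_{i-}).$$
Next I would split into the two hypothesized cases. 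In the first case ($\human_{i+}\le\alg_{i+}$ and $\human_{i-}\ge\alg_{i-}$), the right-hand side above equals $\alg_{i+} - \alg_{i-}$; in the second case ($\human_{i+}\ge\alg_{i+}$ and $\human_{i-}\le\alg_{i-}$), it equals $\human_{i+}-\human_{i-}$. In each case I then bound this difference by the corresponding loss disparity: since $\alg_{i+}$ is at most the largest algorithmic loss over all regimes and $\alg_{i-}$ is at least the smallest, $\alg_{i+}-\alg_{i-}\le\epsilon_a$, and symmetrically $\human_{i+}-\human_{i-}\le\epsilon_h$. Hence $\epsilon_c\le\epsilon_a$ in the first case and $\epsilon_c\le\epsilon_h$ in the second, so $\epsilon_c\le\max(\epsilon_a,\epsilon_h)$ in either case, as claimed.

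There is no serious obstacle; once the case split is in place the argument is a one-line application of the bounded-combiner assumption. The only point requiring a little care is that $i+$ and $i-$ are defined relative to the \emph{combined} system, not to the algorithm or the human individually, so one must note separately that $\alg_{i+}-\alg_{i-}$ (resp.\ $\human_{i+}-\human_{i-}$) is nonetheless dominated by the algorithm's (resp.\ human's) global loss disparity --- which is immediate from the definition of $\epsilon_a$ (resp.\ $\epsilon_h$). It is also worth remarking why the hypothesis cannot be dropped: if, for instance, the algorithm dominates in \emph{both} extreme regimes, then Assumption~\ref{assump:bounded} only forces $\comb_{i+}\ge\alg_{i+}$ and $\comb_{i-}\ge\alg_{i-}$, which does not prevent the combined spread from exceeding both $\epsilon_a$ and $\epsilon_h$ --- precisely the behavior exhibited by Table~\ref{tab:n3counterex}.
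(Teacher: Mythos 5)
Your proposal is correct and follows essentially the same route as the paper's proof: express $\epsilon_c$ as the difference of combined losses in the extreme regimes $i+$ and $i-$, apply Assumption~\ref{assump:bounded} in each, and use the case hypothesis to collapse the bound to $\alg_{i+}-\alg_{i-}\le\epsilon_a$ or $\human_{i+}-\human_{i-}\le\epsilon_h$. Your added observation about why the hypothesis is needed (and its connection to Table~\ref{tab:n3counterex}) is accurate but not required.
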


This last lemma gives our first positive result for fairness: it gives conditions where human-algorithm system exhibiting complementarity at least doesn't exacerbate unfairness. Specifically, what it requires is that neither the unaided human nor the algorithm dominates the other for both of the most extreme regimes (where the combined system has highest and lowest loss). As we would expect, the scenario in Table \ref{tab:n3counterex} violates this: the unaided human dominates in both regime 1 (highest loss) and regime 2 (lowest loss), which allows the combined loss disparity $\epsilon_c$ to be greater than $\max(\epsilon_a, \epsilon_h)$.  Interestingly, Lemma \ref{lem:fairupbound} is quite powerful: it only relies on the losses within two specific regimes and holds regardless of whether the overall system satisfies complementarity. Practitioners could use Lemma \ref{lem:fairupbound} to guide their algorithm development: so long as the preconditions are satisfied, they can guarantee that the combined system will never exacerbate unfairness. 

As we mentioned previously, these definitions are only a few of the possible fairness concerns we could analyze. However, this analysis highlights the importance of considering fairness, especially ways that it might be in tension with achieving complementarity.

\section{Conclusion and Future Directions}

In this work, we introduce a simple theoretical model of human-algorithm collaboration, which we show is flexible enough to encompass models analyzed in prior work. 
Using this model, we obtain theoretical impossibility results that characterize settings where complementarity is not achievable.
We also use this framework to construct cases where complementarity is possible, given certain conditions on the loss distributions. Finally, we consider the implications, especially fairness, of the requirements in order to achieve complementarity. 

Our approach admits multiple possible avenues for future work. Our work highlights the importance of variable loss rates: algorithmic loss that is not constant over the input space. However, as mentioned previously, it may not be possible to achieve extremely variable loss rates. Future work could model algorithmic loss rates more explicitly, describing loss distributions that are both achievable and lead to complementary performance. For example, some prior work has demonstrated that, for many algorithms, reducing loss becomes harder as the level of loss decreases, which could make it more difficult to achieve highly variable loss rates \citep{hestness2017deep, kaplan2020scaling}. 

Similarly, future work could relax assumptions made in our work. Relaxing Assumption \ref{assump:bounded}, for example, could involve analyzing combining rules that, on any individual regime, could do better or worse than the human or algorithmic input loss rates. Relaxing Assumption \ref{assump:lossonly} could involve modeling cases where regimes with identical human and algorithmic loss rates might be treated differently by the combiner. Any of these future analyses could allow us to have greater insight into a variety of ways human-algorithm systems perform. 

\ifarxiv
\else 
\begin{acks} 
\fi 

\ifarxiv 
\section*{Acknowledgments}
\else 
\fi 
This work was supported in part by NSF grant DGE-1650441. We are grateful to Vijay Keswani, Jon Kleinberg, Pang Wei Koh, Michela Meister, Emma Pierson, Charvi Rastogi, Aaron Roth, Kiran Tomlinson, Aaron Tucker, Qian Yang, Joyce Zhou, James Zou, the AWS Machine Learning team, the AI, Policy, and Practice working group at Cornell, and the attendees at the NeurIPS 2021 Workshops on Human-Centered AI and Human and Machine Decisions for invaluable discussions.

\ifarxiv 
\else 
\end{acks}
\fi 

\bibliographystyle{ACM-Reference-Format}
\bibliography{sample-base}

\appendix

\section{Modeling Combining Functions}\label{app:modelingrules}

The definitions presented in Section \ref{sec:modelassump} reflect multiple ways humans could incorporate algorithmic inputs. However, their functional forms in some cases are less clean to analyze. Definition \ref{def:gap} gives an \emph{exemplar} weighting function (original to this work) that we created in order to illustrate common patterns in weighting functions, while also allowing for tractable theoretical analysis. In this function, for $m>0$, the combining rule is more likely to select the algorithm when $\alg < \human$ (when the algorithm has lower loss rate). For $m>0$, the reverse is true: the combining rule \enquote{mistakenly} goes with the input with higher loss.

\begin{definition}[Exemplar weighting function]\label{def:gap}
The exemplar weighting function is given by: 
$$\select(\alg, \human) =\begin{cases}
b- m \cd (\human-\alg) & 0 \leq b- m \cd (\human-\alg) \leq 0\\
0 & b- m \cd (\human-\alg) < 0\\
1 & b- m \cd (\human-\alg) > 1
\end{cases}$$
\end{definition}
 
Figure \ref{fig:prob_select} plots the weighting function for each of the selection rules presented. Note that, in general, as the human gets higher loss, the weight on the human decreases. 

\begin{figure}
    \centering
    \includegraphics[width=4in]{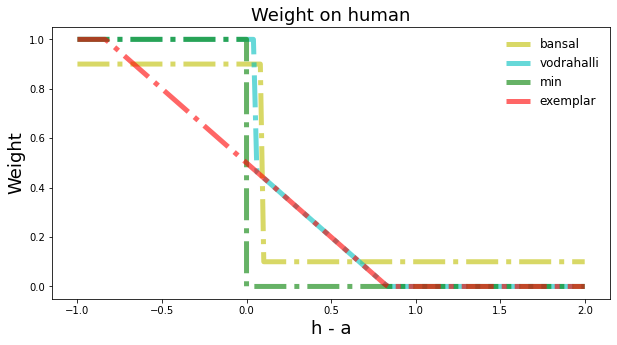}
    \caption{The weight on the human, given a difference $\human-\alg$ between algorithmic and human loss, for multiple weighting functions. Note that, in general, as the human gets higher loss than the algorithm, the weight on the human decreases.}
    \label{fig:prob_select}
\end{figure}

Lemma \ref{lem:linearcompsym}, below, gives an analgous version of Lemma \ref{lem:smallNgeneralbound} for the exemplar function. Note that it similarly shows that complementarity occurs when $\vert \delta_a - \delta_h\vert$ is large (when losses are highly variable). 

\begin{restatable}{lemma}{linearcompsym}
\label{lem:linearcompsym}
Consider the exemplar weighting function with $N =2$ and $\select(\alg_0, \human_0), \select(\alg_1, \human_1) < 1$, and where $\Alg \leq \Human$. Then, the system exhibits complementarity whenever: 
$$\sqrt{\Human - \Alg} \cd \sqrt{\frac{1-p}{p} \cd \p{\frac{1-b}{m} - (\Human-\Alg)}} < \vert \delta_a - \delta_h \vert $$
\end{restatable}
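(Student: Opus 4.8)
The plan is to substitute the exemplar weighting function of Definition~\ref{def:gap} into the general complementarity criterion of Lemma~\ref{lem:compN}, specialized to $\nspace=2$. Using the parametrization of Section~\ref{sec:comppossn2}, I would write $\delta_{\alg 1}=\delta_a$ and $\delta_{\alg 2}=-\tfrac{p}{1-p}\delta_a$, and analogously for $\delta_{\human 1},\delta_{\human 2}$, and set $D:=\delta_a-\delta_h$, so that $\delta_{\alg i}-\delta_{\human i}$ equals $D$ in regime~$1$ and $-\tfrac{p}{1-p}D$ in regime~$2$. The one-line identity $\human_i-\alg_i=(\Human-\Alg)-(\delta_{\alg i}-\delta_{\human i})$ shows that, on the linear branch of the exemplar rule, $\select(\alg_i,\human_i)=\beta+m\,(\delta_{\alg i}-\delta_{\human i})$ with intercept $\beta:=b-m(\Human-\Alg)$; that is, regarded as a random variable over the two regimes under $\{\prob_i\}$, $\select$ is an affine function of the mean-zero quantity $\delta_{\alg\cdot}-\delta_{\human\cdot}$ with slope $m$.

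From here the covariance form of Lemma~\ref{lem:compN} makes the computation short. Because $\delta_{\alg\cdot}-\delta_{\human\cdot}$ has mean zero, $\mathbb{E}[\select(\alg_i,\human_i)]=\beta$ and $Cov\!\big(\select(\alg_i,\human_i),\delta_{\alg i}-\delta_{\human i}\big)=m\cdot\mathrm{Var}(\delta_{\alg\cdot}-\delta_{\human\cdot})$, and for $\nspace=2$ the variance works out to $\tfrac{p}{1-p}D^2$. The criterion $(\Human-\Alg)\,\mathbb{E}[\select]<Cov(\cdot)$ then reduces to the single quadratic inequality $(\Human-\Alg)\,\beta<\tfrac{p}{1-p}\,m\,D^2$; dividing by $m>0$, solving for $D^2$, and taking square roots produces the displayed bound on $\vert\delta_a-\delta_h\vert$. (One could equally run the same substitution through Lemma~\ref{lem:smallNgeneralbound}: its numerator $\select(\alg_1,\human_1)+\tfrac{1-p}{p}\select(\alg_2,\human_2)$ collapses to $\beta/p$ and its denominator $\vert\select(\alg_2,\human_2)-\select(\alg_1,\human_1)\vert$ to $\tfrac{m\vert D\vert}{1-p}$, giving the same inequality.)

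The main obstacle is the clamping bookkeeping for the exemplar rule. The hypothesis asserts only $\select(\alg_i,\human_i)<1$, which rules out the clamp at the top; to substitute the linear branch I must also stay off the clamp at $0$, i.e., ensure $\select(\alg_i,\human_i)>0$ in both regimes (equivalently $\human_i-\alg_i<b/m$), which is the binding direction in whichever regime the human is worse than the algorithm. I would either carry this as an explicit interiority hypothesis or deduce it from the stated quantities; and if a regime's weight were in fact clamped at $0$, its combined loss would equal $\alg_i$ exactly and one checks the resulting complementarity condition is only easier to meet, so the interior case is the one that produces the clean bound. A second, purely computational point is that the expression under the square root must be nonnegative for the statement to be meaningful; this follows from $\Alg\le\Human$ together with $\select(\alg_i,\human_i)\in[0,1]$, which keeps $\Human-\Alg$ small relative to the exemplar rule's intercept parameters. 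Once these checks are handled, what remains is routine algebra.
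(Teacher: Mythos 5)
Your overall route is the same as the paper's. The paper proves this lemma by taking the intermediate inequality $\prob \cd \select(\alg_1,\human_1)\cd(\human_1-\alg_1) + (1-\prob)\cd\select(\alg_2,\human_2)\cd(\human_2-\alg_2) < 0$ from the proof of Lemma \ref{lem:smallNgeneralbound}, substituting the linear branch $\select(\alg_i,\human_i) = b - m\cd(\human_i-\alg_i)$, and expanding the resulting quadratic in the $\delta$'s; your covariance packaging via Lemma \ref{lem:compN} (with $\select = \beta + m\cd(\delta_{\alg i}-\delta_{\human i})$, $\beta = b - m\cd(\Human-\Alg)$, and $\mathrm{Var}(\delta_{\alg\cdot}-\delta_{\human\cdot}) = \frac{p}{1-p}(\delta_a-\delta_h)^2$) is the same computation in different notation, and your parenthetical check through Lemma \ref{lem:smallNgeneralbound} matches the paper's starting point exactly. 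Your treatment of the boundary issues is, if anything, more careful than the paper's: the paper also only assumes $\select(\alg_i,\human_i) < 1$ and substitutes the linear form without discussing the lower clamp, whereas you note that the clamped-at-zero case reduces to $\comb = \alg_i$ and only makes complementarity easier.

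The genuine gap is your final claim that "routine algebra produces the displayed bound." It does not: your criterion $(\Human-\Alg)\cd\beta < \frac{p}{1-p}\cd m\cd D^2$ with $\beta = b - m\cd(\Human-\Alg)$ solves to $D^2 > \frac{1-p}{p}\cd(\Human-\Alg)\cd\p{\frac{b}{m} - (\Human-\Alg)}$, i.e.\ a bound with $\frac{b}{m}$ where the lemma statement has $\frac{1-b}{m}$. (The paper's own proof reaches the same inequality with $b$, then silently replaces $b$ by $1-b$ before the final line, so the discrepancy appears to be an inconsistency in the paper rather than in your algebra---but you cannot simply assert that the stated display follows; you must either derive the $\frac{b}{m}$ version and flag the mismatch, or adopt whatever convention reconciles it.) Relatedly, your argument that the radicand is nonnegative because $\select(\alg_i,\human_i)\in[0,1]$ is the right one for the $\frac{b}{m}$ version (it is exactly $\mathbb{E}[\select] = \beta \geq 0$, which needs the lower clamp to be inactive); the paper instead spends its last paragraph arguing, for the $\frac{1-b}{m}$ version, that a negative radicand would force some $\select(\alg_i,\human_i) > 1$, contradicting the hypothesis. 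So your plan is sound and essentially the paper's, but as written it proves a bound with a different constant than the one stated, and that needs to be addressed explicitly rather than absorbed into "routine algebra."
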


\section{Proofs}\label{app:proof}

\equivalent*

\begin{proof}
We will define a weighting function as follows: 
$$\select(\alg, \human) = \begin{cases}
\frac{\comb(\alg, \human) - \alg}{\human - \alg} & \alg \ne \human \\
\frac{1}{2} & \alg = \human
\end{cases}$$
Note that if $\comb(\alg, \human) = \human$, then $\select(\alg, \human) = 1$ (the weighting function puts all weight on the unaided human), and if $\comb(\alg, \human)= \alg$, then $\select(\alg, \human) = 0$ (the weighting function puts no weight on the unaided human).

First, we'll show that $\select(\alg, \human) \in [0, 1]$. If $\alg = \human$, this is true by construction (because $\select(\alg, \human) = \frac{1}{2}$). What we want to show is: 
$$0 \leq \frac{\comb(\alg, \human) - \alg}{\human - \alg} \leq 1$$
If $\human > \alg$, then this is equivalent to requiring: 
$$0 \leq \comb(\alg, \human)- \alg \leq \human - \alg$$
The lefthand inequality is satisfied because $\comb(\alg, \human) \geq \min(\alg, \human) = \alg $ (in this case). The righthand inequality is satisfied because $\comb(\alg, \human) \leq \max(\alg, \human) = \human$ (in this case). 

On the other hand, if $\human < \alg $, then the inequality we want to show is: 
$$0 \geq \comb(\alg, \human) - \alg \geq \human - \alg$$
Again, the lefthand inequality is satisfied because $\comb(\alg, \human) \leq \max(\alg, \human) = \alg $ (in this case). The righthand inequality is satisfied because $\comb(\alg, \human) \geq \min(\alg, \human) = \human $ (in this case). 

Next, we'll show that this weighting function is correct: that is, that: 
$$\comb(\alg, \human) = (1-\select(\alg, \human)) \cd \alg + \select(\alg, \human) \cd \human$$
Note that Assumption \ref{assump:bounded} means that if $\alg = \human$, then $\comb(\alg, \human) = \alg = \human$, so any $\select(\alg, \human) \in [0, 1]$ would result in a correct weighting function.  For $\alg \ne \human$, we can write: 
\begin{align*}
& (1-\select(\alg, \human)) \cd \alg + \select(\alg, \human) \cd \human \\
 = &  \alg \cd \p{1-\frac{\comb(\alg, \human) - \alg}{\human - \alg}} + \human \cd \frac{\comb(\alg, \human) - \alg}{\human - \alg} \\
 = & \alg \cd \frac{\human - \alg - \comb(\alg, \human) + \alg}{\human -\alg} + \human \cd \frac{\comb(\alg, \human) - \alg}{\human - \alg}\\
 = & \alg \cd \frac{\human - \comb(\alg, \human)}{\human -\alg} + \human \cd \frac{\comb(\alg, \human) - \alg}{\human - \alg}\\
 = & \frac{\alg \cd \human - \alg \cd \comb(\alg, \human) + \human \cd \comb(\alg, \human) - \alg \cd \human}{\human - \alg}\\
 = &\frac{\comb(\alg, \human) \cd (\human - \alg)}{\human - \alg}\\
 = & \comb(\alg, \human)
\end{align*}
as desired. 
\end{proof}

\constant*

\begin{proof}
Constant loss rates means that $\alg_i = \Alg$ and $\human_i = \Human$ for all $i \in [\nspace]$. The combined system has loss: 
\begin{equation*}
\sum_{i \in [\nspace]}\prob_i \cd \comb(\alg_i, \human_i) = \sum_{i \in [\nspace]}\prob_i \cd \comb(\Alg, \Human) = \comb(\Alg, \Human) \geq \min(\Alg,\Human) 
\end{equation*}
\end{proof}

\cantdom*

\begin{proof}
WLOG, we will assume that $\Alg \leq \Human$. Then, complementarity occurs when: 
\begin{align*}
\sum_{i =1}^{\nspace} \prob_i \cd \comb(\alg_i, \human_i) < & \Alg\\
\sum_{i =1}^{\nspace}\prob_i \cd ((1-\select(\alg_i, \human_i)) \cd \alg_i + \select(\alg_i, \human_i) \cd \human_i) < & \sum_{i=1}^{\nspace} \prob_i \cd \alg_i\\
\sum_{i =1}^{\nspace} \prob_i \cd (-\select(\alg_i, \human_i) \alg_i + \select(\alg_i, \human_i) \cd \human_i) <  & 0\\
\sum_{i =1}^{\nspace} \prob_i \cd \select(\alg_i, \human_i)\cd (\human_i - \alg_i) < & 0
\end{align*}

If the algorithm weakly dominates the unaided human ($\human_i \geq \alg_i$ for all $i$), then this inequality can never be satisfied because the entire lefthand side is positive or 0. If the unaided human weakly dominates the algorithm ($\alg_i \geq \human_i$ for all $i$) then $\Alg \geq \Human$. From our previous assumption that $\Alg \leq \Human$, which means that we must have that $\Alg = \Human$: both have identical average losses.

We will now show that further analysis means in the case that $\Alg = \Human$, complementarity is still impossible. If $\Alg = \Human$, we could have equivalently written the complementarity condition as:
\begin{align*}
\sum_{i =1}^{\nspace} \prob_i \cd \comb(\alg_i, \human_i) < & \Human = \Alg \\
\sum_{i =1}^{\nspace}\prob_i \cd ((1-\select(\alg_i, \human_i)) \cd \alg_i + \select(\alg_i, \human_i) \cd \human_i) <  &\sum_{i=1}^{\nspace} \prob_i \cd \human_i\\
\sum_{i =1}^{\nspace} \prob_i \cd ((1-\select(\alg_i, \human_i)) \cd \alg_i +(1-\select(\alg_i, \human_i)) \cd \human_i) < & 0\\
\sum_{i =1}^{\nspace} \prob_i \cd (1-\select(\alg_i, \human_i))\cd (\alg_i - \human_i) <& 0
\end{align*}

In this case, we've assumed that $\alg_i \geq \human_i$. However, this means that the lefthand side of the above inequality is positive, which means complementarity isn't satisfied. 
\end{proof}

\convex*

\begin{proof}
The first inequality in this proof is by Jensen's inequality and the last inequality is due to our construction of the combining function:
\begin{equation*}\sum_{i \in [\nspace]}\prob_i \cd \comb(\alg_i, \human_i) \geq \comb\p{\sum_{i \in [\nspace]}\prob_i  \cd \alg_i, \sum_{i \in [\nspace]}\prob_i \cd \human_i} = \comb(\Alg, \Human) \geq \min(\Alg, \Human)
\end{equation*}
\end{proof}

\smallNgeneralbound*

\begin{proof}
If we assume that $\Alg \leq \Human$, then complementarity occurs whenever $C < \Alg$, or:
The system exhibits complementarity when: 
\begin{align*}
p \cd \comb(\alg_1, \human_1) + (1-p) \cd \comb(\alg_2, \human_2)< & \Alg\\
p \cd ((1-\select(\alg_1, \human_1)) \cd \alg_1 + \select(\alg_1, \human_1) \cd \human_1) + (1-p) \cd ((1-\select(\alg_2, \human_2) \cd \alg_2 + \select(\alg_2, \human_2) \cd \human_2) < & p \cd \alg_1 + (1-p) \cd \alg_2\\
\prob \cd \select(\alg_1, \human_1) \cd (\human_1 - \alg_1) + (1-\prob) \cd \select(\alg_2, \human_2) \cd (\human_2 - \alg_2) < & 0
\end{align*}

We insert the values of $\alg_i, \human_i$, to get: 
\begin{align*}
\prob \cd \select(\alg_1, \human_1) \cd (\Human + \delta_h - \Alg - \delta_a) + (1-\prob) \cd \select(\alg_2, \human_2) \cd \p{\Human - \frac{\prob}{1-\prob} \cd \delta_h - \Alg + \frac{\prob}{1-\prob} \cd \delta_a} < &  0\\
(\Human-\Alg) \cd (\prob \cd \select(\alg_1, \human_1) + (1-\prob) \cd \select(\alg_2, \human_2)) +\prob \cd \select(\alg_1, \human_1) \cd ( \delta_h -\delta_a) -\prob \cd \select(\alg_2, \human_2) \cd \p{ \delta_h - \delta_a} < & 0\\
(\Human-\Alg) \cd (\prob \cd \select(\alg_1, \human_1) + (1-\prob) \cd \select(\alg_2, \human_2)) -\prob\cd (\delta_a -\delta_h)\cd (\select(\alg_1, \human_1) - \select(\alg_2, \human_2)) < & 0 
\end{align*}

Note that we have assumed $\Human \geq \Alg$, so in order for this inequality to hold, we must have that the other term (involving $\delta_a, \delta_h$) be positive. Note that by Lemma \ref{lem:constant} we cannot have $\select(\alg_1, \human_1) = \select(\alg_2, \human_2)$ if it exhibits complementarity. 
In order to have the inequality satisfied, one of two conditions must hold: 

\begin{itemize}
    \item \textbf{Case 1:} $\delta_a > \delta_h$ and $\select(\alg_1, \human_1) >\select(\alg_2, \human_2)$ (the unaided human is weighted more heavily in regime 1). Note that in this case, we must have $\human_2 >\alg_2$, because: 
    
    \begin{align*}
    \human_2 > &  \alg_2\\
    \Human - \frac{\prob}{1-\prob} \cd \delta_h >  &\Alg - \frac{\prob}{1-\prob} \cd \delta_a\\
    \Human - \Alg - \frac{\prob}{1-\prob} \cd (\delta_h - \delta_a) >& 0\\
    \Human - \Alg + \frac{\prob}{1-\prob} \cd (\delta_a - \delta_h) > &0
    \end{align*}
    which is satisfied because $\Human - \Alg \geq 0$ and $\delta_a > \delta_h$. By Lemma \ref{lem:cantdom}, we know that $\human_2 > \alg_2$ implies that $\human_1 \leq \alg_1$. Taken together with $\select(\alg_1, \human_1) > \select(\alg_2, \human_2)$, this means that the combined system must weight the human more heavily in instance 1, where it has lower loss than the human. 
    \item \textbf{Case 2:} $\delta_a < \delta_h $ and $\select(\alg_1, \human_1) <\select(\alg_2, \human_2)$ (the unaided human is weighted more heavily in regime 2). In this case, we must have $\human_1 > \alg_1$, because: 
    
    \begin{align*}
    \Human + \delta_h > & \Alg + \delta_a \\
    \Human - \Alg + \delta_h - \delta_a >  & 0 
    \end{align*}
    This must be satisfied because $\Human - \Alg>0$ and $\delta_h > \delta_a$. By similar reasoning to above, this means that $\human_2 \leq \alg_2$. Taken with $\select(\alg_1, \human_1) <\select(\alg_2, \human_2)$, this again means that the system must weight the unaided more heavily in the regime where it has lower loss. 
\end{itemize}
Finally, we can simplify the inequality: 
$$(\Human-\Alg) \cd (\prob \cd \select(\alg_1, \human_1) + (1-\prob) \cd \select(\alg_2, \human_2)) < \prob\cd (\delta_a -\delta_h)\cd (\select(\alg_1, \human_1) - \select(\alg_2, \human_2)) $$
If $\delta_a > \delta_h$, then we know that $\select(\alg_1, \human_1) > \select(\alg_2, \human_2)$, so we can rewrite this as: 
$$(\Human - \Alg) \cd \frac{\select(\alg_1, \human_1) + \frac{1-\prob}{\prob} \cd \select(\alg_2, \human_2)}{\select(\alg_1, \human_1) - \select(\alg_2, \human_2)} \leq \delta_a- \delta_h$$
On the other hand, if $\delta_a < \delta_h $, we know that $\select(\alg_1, \human_1) < \select(\alg_2, \human_2)$, so we can rewrite this as: 
\begin{align*}
(\Human-\Alg) \cd (\prob \cd \select(\alg_1, \human_1) + (1-\prob) \cd \select(\alg_2, \human_2)) < & \prob\cd (\delta_h -\delta_a)\cd (\select(\alg_2, \human_2) - \select(\alg_1, \human_1))\\
(\Human-\Alg) \cd \frac{\select(\alg_1, \human_1) + \frac{1-\prob}{\prob} \cd \select(\alg_2, \human_2)}{(\select(\alg_2, \human_2) - \select(\alg_1, \human_1)} < & \delta_a- \delta_h)
\end{align*}
Either way simplifies to: 
$$(\Human-\Alg) \cd \frac{\select(\alg_1, \human_1) + \frac{1-\prob}{\prob} \cd \select(\alg_2, \human_2)}{\vert \select(\alg_2, \human_2) - \select(\alg_1, \human_1)\vert} <\vert (\delta_a -\delta_h)\vert $$
\end{proof}

\compN*

\begin{proof}
If we assume that $\Alg \leq \Human$, then complementarity occurs whenever $C < \Alg$, or: 
\begin{align*}
\sum_{i=1}^{\nspace} \prob_i \cd \comb(\alg_i, \human_i) \leq & \sum_{i=1}^{\nspace} \prob_i \cd \alg_i\\
\sum_{i=1}^{\nspace} \prob_i \cd ((1-\select(\alg_i, \human_i)) \cd \alg_i + \select(\alg_i, \human_i) \cd \human_i) \leq  &\sum_{i=1}^{\nspace} \prob_i \cd \alg_i\\
\sum_{i=1}^{\nspace} \prob_i \cd (-\select(\alg_i, \human_i) \cd \alg_i + \select(\alg_i, \human_i) \cd \human_i) \leq & 0\\
\sum_{i=1}^{\nspace} \prob_i \cd \select(\alg_i, \human_i) \cd (\human_i - \alg_i) \leq  &0
\end{align*}
Plugging in for the values gives: 
\begin{align*}
\sum_{i=1}^{\nspace} \prob_i \cd \select(\alg_i, \human_i) \cd (\Human + \delta_{\human i} - \Alg - \delta_{\alg i}) \leq & 0\\
(\Human - \Alg) \cd \sum_{i=1}^{\nspace} \prob_i \cd \select(\alg_i, \human_i) < & \sum_{i=1}^{\nspace}\prob_i \cd \select(\alg_i, \human_i) \cd (\delta_{\alg i} - \delta_{\human i})
\end{align*}
as desired. 
\end{proof}

\fairinputbound*
\begin{proof}
First, we will use the complementarity result from partway through the proof of Lemma \ref{lem:compN}. We calculate the difference between the average algorithmic loss and the average combined human-algorithmic loss (which gives us the average benefit of collaboration): 
\begin{align*}
\Alg - C= & \sum_{i=1}^{\nspace} \prob_i \cd \alg_i - \sum_{i=1}^{\nspace} \prob_i \cd \comb(\alg_i, \human_i) \\
 = & \sum_{i=1}^{\nspace} \prob_i \cd \alg_i - \sum_{i=1}^{\nspace} \prob_i \cd ((1-\select(\alg_i, \human_i)) \cd \alg_i + \select(\alg_i, \human_i) \cd \human_i) \\
= &\sum_{i=1}^{\nspace} \prob_i \cd (\alg_i - \alg_i + \select(\alg_i, \human_i) \cd \alg_i - \select(\alg_i, \human_i) \cd \human_i)\\
= & \sum_{i=1}^{\nspace} \prob_i \cd \select(\alg_i, \human_i) \cd (\alg_i - \human_i)\\
= & \sum_{i=1}^{\nspace} \prob_i \cd \select(\alg_i, \human_i) \cd (\Alg + \delta_{\alg i} - \Human - \delta_{\human i})\\
 = & (\Alg - \Human) \cd \sum_{i=1}^{\nspace} \prob_i \cd \select(\alg_i, \human_i) + \sum_{i=1}^{\nspace} \prob_i \cd \select(\alg_i, \human_i) \cd (\delta_{\alg i} - \delta_{\human i})\\
\end{align*}
Rearranging gives: 
$$\Alg - C+ (\Human - \Alg) \cd \sum_{i=1}^{\nspace} \prob_i \cd \select(\alg_i, \human_i) = \sum_{i=1}^{\nspace} \prob_i \cd \select(\alg_i, \human_i) \cd (\delta_{\alg i} - \delta_{\human i})$$

Next, we will define: 
$$\epsilon_a = \alg_{\alg+} - \alg_{\alg-} = \Alg + \delta_{\alg +} - \Alg - \delta_{\alg-} = \delta_{\alg+} - \delta_{\alg-}$$
$$\epsilon_h = \human_{\human+} - \human_{\human-} = \Human + \delta_{\human +} - \Human - \delta_{\human-} = \delta_{\human+} - \delta_{\human-}$$
where $\alg+, \alg-, \human+, \human-$ are the indices of the maximum and minimum loss for the algorithm and unaided human, respectively. Note that we require $\sum_{i=1}^{\nspace} \prob_i \cd \delta_{\alg i} = \sum_{i=1}^{\nspace} \prob_i \cd \delta_{\human i} = 0$, which implies that $\delta_{\alg+}, \delta_{\human+} \geq 0$ and $\delta_{\alg-}, \delta_{\human-} \leq 0$. Then, we know that: 
$$\epsilon_a + \epsilon_h =  \delta_{\alg+} - \delta_{\alg-} +\delta_{\human+} - \delta_{\human-} > \delta_{\alg+} - \delta_{\human-} $$
Define $\mathcal{P} = \{i \ \vert \ \delta_{\alg i}\geq \delta_{\human i}$ and $\mathcal{N} = \{i \ \vert \ \delta_{\alg i}< \delta_{\human i}$. By definition, $\delta_{\alg+} \geq \delta_{\alg i}$ and $\delta_{\human -} \leq \delta_{\human i}$ for all $i \in [\nspace]$. Then, we know that: 
$$\delta_{\alg+} - \delta_{\human -} \geq \sum_{i \in \mathcal{P}}\prob_i \cd (\delta_{\human i} - \delta_{\alg i}) \geq \sum_{i \in \mathcal{P}} \prob_i \cd \select(\alg_i, \human_i) \cd  (\delta_{\human i} - \delta_{\alg i})$$
where we have used the fact that $\select(\alg_i, \human_i) \leq 1$. Finally, we know that: 
$$\sum_{i \in \mathcal{P}} \prob_i \cd \select(\alg_i, \human_i) \cd  (\delta_{\human i} - \delta_{\alg i}) \geq \sum_{i \in \mathcal{P}} \prob_i \cd \select(\alg_i, \human_i) \cd  (\delta_{\human i} - \delta_{\alg i}) + \sum_{i \in \mathcal{N}} \prob_i \cd \select(\alg_i, \human_i) \cd  (\delta_{\human i} - \delta_{\alg i}) = \sum_{i=1}^{\nspace} \prob_i \cd \select(\alg_i, \human_i) \cd  (\delta_{\human i} - \delta_{\alg i})$$
where the first inequality comes by the fact that $\delta_{\human i} - \delta_{\alg i}$ for $i \in \mathcal{N}$. Finally, we can combine this analysis with our previous analysis on the gap in loss rate between the algorithm and the combined system:  

$$\Alg - C+ (\Human - \Alg) \cd \sum_{i=1}^{\nspace} \prob_i \cd \select(\alg_i, \human_i) = \sum_{i=1}^{\nspace} \prob_i \cd \select(\alg_i, \human_i) \cd (\delta_{\alg i} - \delta_{\human i}) \leq \delta_{\alg +} - \delta_{\human -} < \epsilon_a + \epsilon_h$$
as desired. 
\end{proof}

\fairupbound*
\begin{proof}
We wish to upper bound $\epsilon_c$, which is given by: 
$$\epsilon_c = \comb(\alg_{i+}, \human_{i+}) - \comb(\alg_{i-}, \human_{i-})$$
such that: 
$$i+ = \text{argmax}_{i \in [\nspace]} \comb(\alg_i,\human_i) \quad i- = \text{argmin}_{i \in [\nspace]} \comb(\alg_i,\human_i)$$
Note that by Assumption \ref{assump:bounded}, we must have: 
$$\comb(\alg_{i+}, \human_{i+}) \leq \max(\alg_{i+}, \human_{i+})\text{ and } 
\comb(\alg_{i-}, \human_{i-}) \geq \min(\alg_{i-}, \human_{i-})$$
The statement of this lemma gives two cases, which we will consider in turn. \\
In Case 1, we assume that: 
$$\human_{i+} \leq\alg_{i+} \text{ and } \human_{i-} \geq\alg_{i-}$$
In this case, we know that: 
$$\comb(\alg_{i+}, \human_{i+}) - \comb(\alg_{i-}, \human_{i-}) \leq \alg_{i+} - \alg_{i-} \leq \epsilon_a \leq \max(\epsilon_a, \epsilon_h)$$
In Case 2, we assume that: 
$$\human_{i+} \geq\alg_{i+} \text{ and } \human_{i-} \leq\alg_{i-}$$
In this case, we know that: 
$$\comb(\alg_{i+}, \human_{i+}) - \comb(\alg_{i-}, \human_{i-}) \leq \human_{i+} - \human_{i-} \leq \epsilon_h \leq \max(\epsilon_a, \epsilon_h)$$
In either case, $\epsilon_c \leq \max(\epsilon_a, \epsilon_h)$
\end{proof}

\linearcompsym*

\begin{proof}
This lemma is a more specific version of Lemma \ref{lem:smallNgeneralbound}, so we start with an intermediate result from that proof. Complementarity occurs whenever: 
$$\prob \cd \select(\alg_1, \human_1) \cd (\human_1 - \alg_1) + (1-\prob) \cd \select(\alg_2, \human_2) \cd (\human_2 - \alg_2) <  0$$
For the exemplar combining rule, we have that: 
$$\select(\alg_i, \human_i) = b - m \cd (\human_i - \alg_i)$$
Plugging in for the values of $\select(\alg_1, \human_1), \select(\alg_2, \human_2)$ gives: 
\begin{align*}
\prob \cd (b - m \cd (\human_1 - \alg_1)) \cd (\human_1 - \alg_1) + (1-\prob) \cd (b - m \cd (\human_2 - \alg_2))\cd (\human_2 - \alg_2) < &   0\\
b \cd (\prob  \cd (\human_1 - \alg_1) + (1-\prob) \cd (\human_2 - \alg_2)) -m (\prob \cd (\human_1 - \alg_1)^2 + (1-\prob) \cd (\human_2 - \alg_2)^2) < & 0 \\
b \cd (\prob  \cd \human_1 - \prob \cd \alg_1 + (1-\prob) \cd \human_2 - (1-\prob) \cd \alg_2)-m (\prob \cd \cd (\human_1 - \alg_1)^2 + (1-\prob) \cd (\human_2 - \alg_2)^2) < & 0 \\
b \cd (\Human - \Alg)-m (\prob \cd (\human_1 - \alg_1)^2 + (1-\prob) \cd (\human_2 - \alg_2)^2) < & 0 \\
\end{align*}
We can analyze the term with the $m$ coefficient by plugging in for values of $\alg_i, \human_i$. 
\begin{align*}
 & \prob \cd (\human_1 - \alg_1)^2 + (1-\prob) \cd (\human_2 - \alg_2)^2\\
 =& \prob \cd(\Human + \delta_h - \Alg - \delta_a)^2 + (1-\prob) \cd \p{\Human - \frac{\prob}{1-\prob} \cd \delta_h - \Alg + \frac{\prob}{1-\prob} \cd \delta_a}^2 
\end{align*}
We expand out each to get: 
\begin{align*}
= & \prob \cd (\Human - \Alg)^2 + \prob\cd (\delta_h - \delta_a)^2 + 2 \prob \cd (\Human - \Alg) \cd (\delta_h - \delta_a) + (1-\prob) \cd (\Human - \Alg)^2 \\
& +(1-\prob) \cd \frac{\prob^2}{(1-\prob)^2} \cd (\delta_a - \delta_h)^2 + 2 \cd (1-\prob) \cd (\Human - \Alg) \cd \frac{\prob}{1-\prob} \cd (\delta_a - \delta_h)
\end{align*}
We note that two of the terms cancel: 
\begin{align*}
 & 2 \prob \cd (\Human - \Alg) \cd (\delta_h - \delta_a)  +  2 \cd (1-\prob) \cd (\Human - \Alg) \cd \frac{\prob}{1-\prob} \cd (\delta_a - \delta_h)\\
  = &  2 \prob \cd (\Human - \Alg) \cd (\delta_h - \delta_a)  +  2 \cd \prob \cd (\Human - \Alg) \cd (\delta_a - \delta_h)\\
 = & 0
\end{align*}
Next, we can simplify the other terms: 
\begin{align*}
= & \prob \cd (\Human - \Alg)^2 + \prob\cd (\delta_h - \delta_a)^2 + (1-\prob) \cd (\Human - \Alg)^2 + (1-\prob) \cd \frac{\prob^2}{(1-\prob)^2} \cd (\delta_a - \delta_h)^2 \\
= & (\Human - \Alg)^2 + \prob (\delta_h - \delta_a)^2 + \frac{\prob^2}{1-\prob} \cd (\delta_a - \delta_h)^2\\
= & (\Human - \Alg)^2 + (\delta_h - \delta_a)^2 \cd \p{\prob + \frac{\prob^2}{1-\prob}}\\
= & (\Human - \Alg)^2 + (\delta_h - \delta_a)^2 \cd \frac{\prob}{1-\prob}\\
\end{align*}
where we have used that $\prob + \frac{\prob^2}{1-\prob} = \frac{\prob-\prob^2 + \prob^2}{1-\prob} = \frac{\prob}{1-\prob}$. We can combine this with the inequality we were analyzing earlier to get:
\begin{align*}
(1-b) \cd (\Human - \Alg) -m \cd \p{ (\Human - \Alg)^2 + (\delta_h - \delta_a)^2 \cd \frac{\prob}{1-\prob}} < &  0\\
(1-b) \cd (\Human - \Alg) -m \cd(\Human - \Alg)^2 <  & m \cd  (\delta_h - \delta_a)^2 \cd \frac{\prob}{1-\prob} \\
\frac{1-\prob}{\prob} \p{\frac{1-b}{m}   \cd (\Human - \Alg) - (\Human - \Alg)^2} <  & (\delta_h - \delta_a)^2 
\end{align*}
where we have used the assumption that $0 < p < 1$ and $m>0$. If the lefthand side of the inequality is positive, we can take the square root of both sides to get: 
$$\sqrt{\Human - \Alg} \cd \sqrt{\frac{1-\prob}{\prob}  \p{\frac{1-b}{m} - \p{\Human - \Alg} }}<  \vert \delta_h - \delta_a \vert $$
Finally, we will show that the term under the square root, give the assumptions of this lemma. The term under the square root is \emph{negative} if: 
$$\frac{1-b}{m}  - \p{\Human - \Alg} <0$$
or:
\begin{equation}\label{eq:sqrteq}
1 <  b - m \cd (\Human - \Alg)
\end{equation}
We will show that, if this happens, we must have $\select(\alg_1, \human_1) > 1$ or $\select(\alg_2, \human_2)>0$ (either of which violate the assumptions of this lemma). 
For this combining rule, 
$$\select(\alg_1, \human_1) = b - m \cd (\human_1 - \alg_1) = b - m \cd (\Human + \delta_h - \Alg - \delta_a) = b - m \cd (\Human - \Alg) + m \cd (\delta_h- \delta_a)$$
In the event that $\delta_h > \delta_a$, then the above equation is greater than Equation \ref{eq:sqrteq}. Therefore, if Equation \ref{eq:sqrteq} is greater than 1, then $\select(\alg_1, \human_1)$ is also greater than 1, which violates the assumptions of the lemma. 
Similarly, we can write: 
$$\select(\alg_2, \human_2) = b - m \cd (\human_2 - \alg_2) = b - m \cd \p{\Human - \delta_h \cd \frac{\prob}{1-\prob} - \Alg + \delta_a \cd \frac{\prob}{1-\prob}} = b - m \cd (\Human - \Alg) + m \cd \frac{\prob}{1-\prob} \cd (\delta_a- \delta_h)$$
In the event that $\delta_h < \delta_a$, then the above equation is greater than Equation \ref{eq:sqrteq}. Therefore, if Equation \ref{eq:sqrteq} is greater than 1, then $\select(\alg_2, \human_2)$ is also greater than 1, which violates the assumptions of the lemma. 
\end{proof}
\end{document}
\endinput
